\newcommand{\pmreconf}{\textsc{perfect matching reconfiguration}\xspace}
\newcommand{\YES}{\ensuremath{\mathsf{yes}}\xspace}
\newcommand{\NO}{\ensuremath{\mathsf{no}}\xspace}
\newcommand{\ini}{\textup{s}}
\newcommand{\tar}{\textup{t}}
\newcommand{\onestep}[1]{\ensuremath{\overset{#1}{\leftrightarrow}}}
\newcommand{\sevstep}[1]{\ensuremath{\overset{#1}{\leftrightsquigarrow}}}
\newcounter{one}
\newcounter{two}
\newcounter{three}
\newtheorem{theorem}{Theorem}
\newtheorem{lemma}[theorem]{Lemma}
\newtheorem{corollary}[theorem]{Corollary}
\theoremstyle{definition}
\theoremstyle{remark}
\newtheorem{claim}{Claim}
\tikzset{
	edge/.style={thick, gray},
	medge/.style={decorate,very thick,decoration={snake}},
	nedge/.style={very thick,dashed,black},
	vertex/.style={shape=circle,thick,draw,node distance=3em,inner sep=0pt,minimum size=0.5em,fill=white},
	terminal/.style={shape=circle,thick,draw,BrickRed,node distance=3em,inner sep=0pt,minimum size=0.75em,fill=white}
}
\newenvironment{listing}[1]{%
	\begin{list}{*}{%
			\settowidth{\labelwidth}{#1}%
			\setlength{\leftmargin}{\labelwidth}%
			\advance \leftmargin by 12pt
			\setlength{\itemsep}{0pt}%
			\setlength{\parsep}{0pt}%
			\setlength{\topsep}{0pt}%
			\setlength{\parskip}{0pt}%
		}%
	}{%
\end{list}}
\newcommand{\AlgOP}{\textbf{PMROG}}
\title{The Perfect Matching Reconfiguration Problem\footnote{Partially supported by JSPS and MAEDI under the Japan-France Integrated Action Program	(SAKURA).}}
\author[1]{Marthe Bonamy}
\author[2]{Nicolas Bousquet}
\author[3]{Marc Heinrich}
\author[4]{Takehiro Ito\thanks{Partially supported by JST CREST Grant Number JPMJCR1402, and JSPS KAKENHI Grant Numbers JP18H04091 and JP19K11814, Japan.}}
\author[5]{Yusuke Kobayashi\thanks{Partially supported by JST ACT-I Grant Number JPMJPR17UB, and JSPS KAKENHI Grant Numbers JP16K16010 and JP18H05291, Japan.}}
\author[6]{Arnaud Mary}
\author[7]{Moritz M\"uhlenthaler}
\author[8]{Kunihiro Wasa\thanks{Partially supported by JST CREST Grant Numbers JPMJCR18K3 and JPMJCR1401, and JSPS KAKENHI Grant Number JP19K20350, Japan.}}
\affil[1]{CNRS, LaBRI, Universit\'e de Bordeaux, Talence, France}
\affil[2]{Univ. Grenoble Alpes, CNRS, G-SCOP, Grenoble-INP, Grenoble, France}
\affil[3]{Universit\'e Claude Bernard Lyon 1, LIRIS, UMR5205, France}
\affil[4]{Graduate School of Information Sciences, Tohoku University, Japan}
\affil[5]{Research Institute for Mathematical Sciences, Kyoto University, Japan}
\affil[6]{LBBE, Universit\'e Claude Bernard Lyon 1, Lyon, France}
\affil[7]{Fakult\"at f\"ur Mathematik, TU Dortmund University, Germany}
\affil[8]{National Institute of Informatics, Japan}
\begin{document}

\maketitle
\begin{abstract}
	We study the {\sc perfect matching reconfiguration} problem: Given two
	perfect matchings of a graph, is there a sequence of flip operations
	that transforms one into the other?  Here, a flip operation exchanges
	the edges in an alternating cycle of length four.  We are interested in
	the complexity of this decision problem from the viewpoint of graph
	classes.  We first prove that the problem is PSPACE-complete even for
	split graphs and for bipartite graphs of bounded bandwidth with
	maximum degree five.  We then investigate polynomial-time solvable
	cases.  Specifically, we prove that the problem is solvable in
	polynomial time for strongly orderable graphs (that include interval
	graphs and strongly chordal graphs), for outerplanar graphs, and for
	cographs (also known as $P_4$-free graphs).  Furthermore, for each
	yes-instance from these graph classes, we show that a linear number of flip
	operations is sufficient and  we can exhibit a corresponding sequence of
	flip operations in polynomial time.
\if0
	For a (perfect) matching $M$ of a graph $G$, consider a cycle consisting of four vertices $u$, $v$, $x$ and $y$ such that $uv$ and $xy$ are in $M$ and $ux$ and $vy$ are in $E(G)$. 
Then, the flip operation replaces $uv$ and $xy$ in $M$ by $ux$ and $vy$, and obtain another matching $M'$.
Given a graph $G$ and a matching $M$ such that $uv$ and $xy$ are in $M$ and $ux$ and $vy$ are in $E(G)$, the flip of $uv,xy$ for $ux,vy$ consists in replacing $uv$ and $xy$ in $M$ by $ux$ and $vy$.
\fi
\end{abstract}
\newpage

\section{Introduction}
\label{sec:introduction}
Given an instance of some combinatorial search problem and two of its
feasible solutions, a \emph{reconfiguration problem} asks whether one solution
can be transformed into the other in a step-by-step fashion, such that each
intermediate solution is also feasible.
Reconfiguration problems capture dynamic situations, where some
solution is in place and we would like to move to a desired alternative
solution without becoming infeasible. A systematic study of the complexity of
reconfiguration problems was initiated in~\cite{Ito:11}.
Recently the topic has gained a lot of attention in the context of constraint
satisfaction problems and graph problems, such as the independent set problem,
the matching problem, and the dominating set problem. Reconfiguration problems
naturally arise for operational research problems but also are closely related
to uniform sampling (using Markov chains) or enumeration of solutions of a
problem.  For an overview of recent results on reconfiguration problems, the
reader is referred to the surveys of van den Heuvel~\cite{vHeuvel13} and
Nishimura~\cite{Nishimura17}.

In order to define valid step-by-step transformations, an adjacency relation on
the set of feasible solutions is needed.
%
Depending on the problem, there may be different natural choices of adjacency
relations. For instance, we may assume that two matchings of a graph are
adjacent if one can be obtained from the other by exchanging precisely one
edge, i.e., there exists $e \in M$ and $f \in M'$ such that $M
\setminus \{ e \} = M^\prime \setminus \{ f \}$.  
The corresponding modification of a matching is usually referred to as \emph{token jumping}
(TJ). Here, the  \emph{tokens} are the edges of a matching and a token may be
``moved'' from an edge of the matching to another edge 
so that we obtain the another matching.
On can similarly another adjacency relation, where two matchings are adjacent
if one can be obtained from the other by moving a token to some incident edge.
The adjacency relation is called \emph{token sliding}
(TS).  Ito et al.~\cite{Ito:11} gave a polynomial-time algorithm that decides
if there is a transformation between two given matchings under the TJ and TS
operations.  

\subsection{The Perfect Matching Reconfiguration Problem}
\label{subsec:problem}

Recall that a matching of a graph is \emph{perfect} if it covers each
vertex. We study the complexity of deciding if there is a step-by-step
transformation between two given perfect matchings of a graph.
However, according to the adjacency relations given by the TS and TJ
operations, there is no transformation between any two distinct perfect
matchings of a graph. Since the symmetric difference of any two perfect
matchings of a graph consists of even-length disjoint cycles, it is natural to
consider a different adjacency relation for perfect matchings. We say that two
perfect matchings of a graph differ by a \emph{flip} (or \emph{swap}) if
their symmetric difference induces a cycle of length four. We consider two
perfect matchings to be are adjacent if they differ by a flip.  Intuitively,
for two adjacent perfect matchings $M$ and $M'$, we think of a flip as an
operation that exchanges edges in $M \setminus M'$ for edges in $M' \setminus
M$. A flip is in some sense a minimal modification of a perfect matching.

\begin{figure}[tb]
    \begin{center}
        \includegraphics[width=0.75\linewidth]{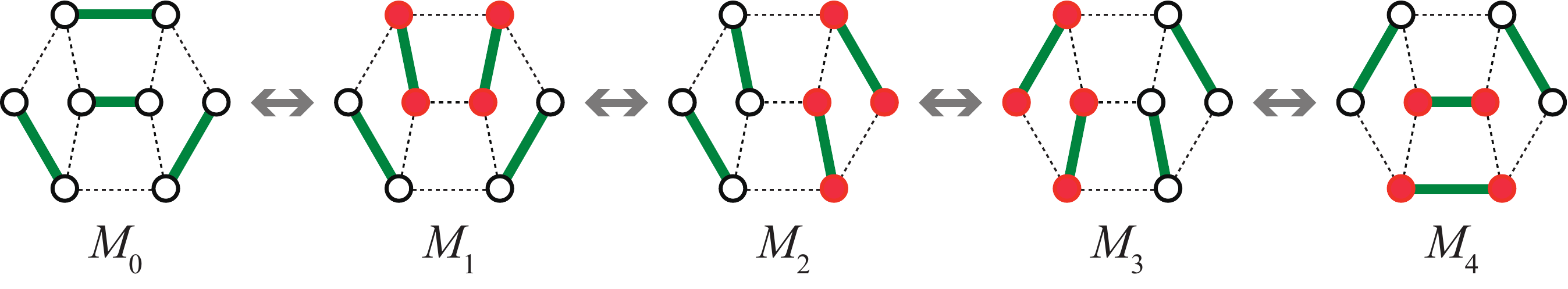}
    \end{center}
    \vspace{-1em}
    \caption{A transformation between perfect matchings $M_0$ and $M_4$ under
        the flip operation. For $1 \leq i \leq 4$, the matching $M_{i}$ can be
        obtained from $M_{i-1}$ by applying the flip operation to the cycle
        induced by the four painted (red) vertices in $M_{i}$.}
    \label{fig:flip}
\end{figure}

An example of a transformation between two perfect matchings of a graph is
given in Figure~\ref{fig:flip}.
We formalize the task of deciding the existence of transformation between two
given perfect matchings as follows.
\bigskip

\begin{quote}
    \textsc{Perfect Matching Reconfiguration}\\
    \textbf{input:} Graph $G$, perfect matchings $M_{\ini}$ and $M_{\tar}$ of $G$.\\
    \textbf{question:} Is there a sequence of flips that transforms $M_{\ini}$ into $M_{\tar}$?
\end{quote}

Note that if we do not restrict the length of a cycle in the definition of a
flip, then for any two perfect matchings $M_{\ini}$ and $M_{\tar}$ of a graph, there is a
sequence of flip operations that transforms $M_{\ini}$ into $M_\tar$, since we can perform
a flip on each cycle of the symmetric difference of $M_\ini$ and $M_\tar$.
As a compromise, we may extend the problem definition to flips on cycles of
fixed constant length $k$, where $k > 4$ and $k$ is even. We refer to the
corresponding reconfiguration problem as $k$-\textsc{Perfect Matching
Reconfiguration}.

\subsection{Related Work}

Transformations between matchings have been studied in various settings. Transformation 
of matchings using flips has been considered for generating random matchings.
Numerous algorithms and hardness results are available for finding
transformations between matchings ---  and more generally, independent sets ---
using the TS and TJ operations.
Furthermore, the flip operation is well-known for stable matchings and some
geometric matching problems related to finding transformations between
triangulations.

\paragraph{Sampling Random Matchings}

The problem of sampling or enumerating perfect matchings in a graph received a considerable attention (see e.g.~\cite{StefankovicVW18}). 
Determining the connectivity, and the diameter of the solution space formed by perfect matchings under the flip operation provide some information on the ergodicity or the mixing time of the underlying Markov chain. Indeed, the connectivity of the chain ensures the irreducibility (and usually the ergodicity) of the underlying Markov chain. Additionally, the diameter of the reconfiguration graph provides a lower bound on the mixing time of the chain.

The use of flips for sampling random perfect matchings was first started in~\cite{DGH01} where it is seen as a generalisation of transpositions for permutations. Their work was later improved and generalized in~\cite{DJM17} and~\cite{DM17}. The focus of these last two articles is to investigate the problem of sampling random perfect matchings using a Markov Chain called the switch chain. Starting from an arbitrary perfect matching, the chain proceeds by applying at each step a random flip (called switch in these papers). The aim of these papers is to characterize classes of graphs for which simulating this chain for a polynomial number of steps is enough to generate a perfect matching close to uniformly distributed. Some of their results can be reformulated in the reconfiguration terminology. In~\cite{DJM17}, it is proved that the largest hereditary class of bipartite graphs for which the reconfiguration graph of perfect matchings with flips is connected is the class of chordal bipartite graphs. This result is generalized in~\cite{DM17} where they characterize the hereditary class of general (non-bipartite) graphs for which the reconfiguration graph is connected. They call this class \textsc{Switchable}. Note that it is not clear whether graphs in this class can be recognized in polynomial time. The question of the complexity of \pmreconf is also mentioned in~\cite{DM17}.

\paragraph{Reconfiguration of Matchings and Independent Sets.}
Recall that matchings of a graph correspond to independent sets of its line
graph.  Although reconfiguration of independent sets received a considerable
attention in the last decade
(e.g.,~\cite{BonsmaKW14,BousquetMP17,DemaineDFHIOOUY14,HearnD05,ItoKOSUY14,KaminskiMM12,Wrochna18}),
all the known results for reconfiguration of independent sets are based on the
TJ or TS operations as adjacency relations.  Thus, none of these results carry over
to the {\sc Perfect Matching Reconfiguration} problem.

A related problem can be found in a more general setting: 
The problem of determining, enumerating, or randomly generating graphs with a fixed degree sequence has received a considerable attention since the fifties (see e.g.~\cite{Senior51,HakiII,Will99}). Given two graphs with a fixed degree sequence, one might want to know if it is possible to transform the one into the other via a sequence of flip operations and if yes, how many steps are needed for such a transformation;
note that the host graph (i.e., the graph $G$ in our problem) is a clique in this setting. 
Hakimi~\cite{HakiII} proved that such a transformation always exists. 
Will~\cite{Will99} proved that the problem of finding a shortest transformation
is \NP-complete, and Bereg and Ito~\cite{Bereg17} provide a $\frac
32$-approximation algorithm for this problem. 

	\begin{figure}[tb]
	\begin{center}
		\includegraphics[width=0.9\linewidth]{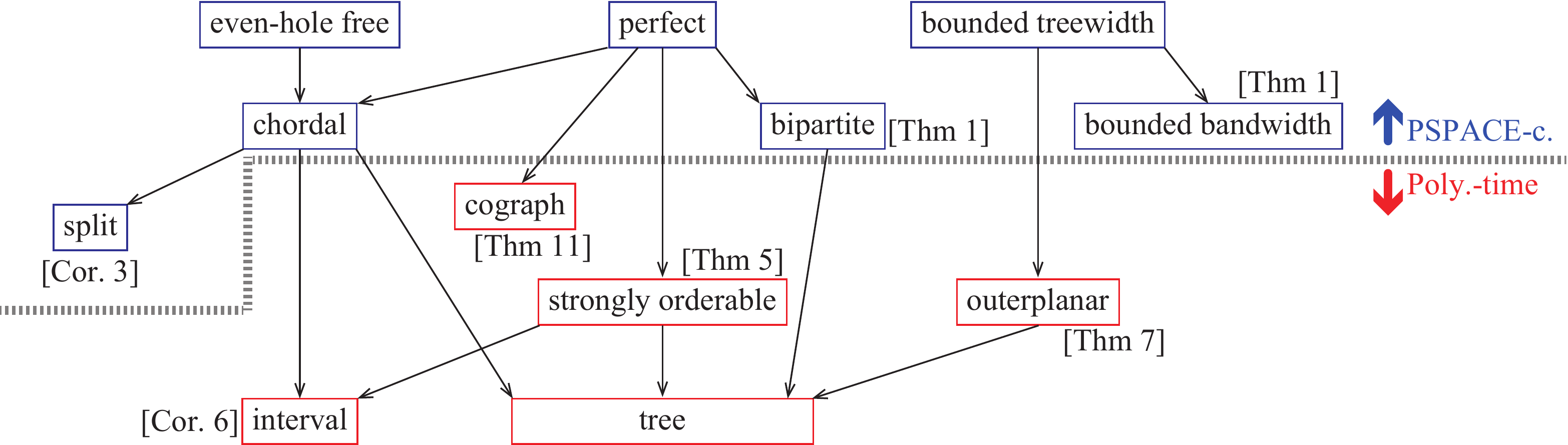}
	\end{center}
\vspace{-1em}
	\caption{Our results, where each arrow represents the inclusion relationship between graph classes: 
		$A \to B$ represents that the graph class $B$ is properly included in the graph class $A$.}
	\label{fig:results}
\end{figure}

\paragraph{Stable Matchings.}
Suppose we are given a bipartite graph and for each vertex a linear preference
order of its neighbors. A matching $M$ is not stable if there is an edge $uv$
not in $M$, such that $u$ prefers $v$ and $v$ prefers $u$ to their respective
$M$-partners. 
The classical algorithm by Gale and Shapley~\cite{GI:89} yields a stable
matching in polynomial time. It is known that any two stable matchings cover
the same vertices, so the stable matchings are perfect matchings of some
subgraph.  Furthermore, they form a distributive lattice under \emph{rotations}
(another word for flips) on preference-orienced cycles, see for
example~\cite{GI:89}.  Essentially, the symmetric difference of two stable
matchings consists of disjoint cycles and we may flip edges on these cycles to
obtain another stable matching. If we drop the preferences, then the question
is simply if we can find a transformation between two perfect matchings by
flipping edges on cycles in the symmetric difference. Clearly the answer is yes,
for example by processing the cycles in the symmetric difference one-by-one. 
We consider a similar setting, but restrict
the length of the cycles.

\paragraph{Flips of Triangulations.} A flip of a triangulation is similar to
flipping an alternating cycle in the sense that we switch between two states of
a quadrilateral. In the context of triangulations, a flip operation
switches the diagonal of a quadrilateral. Transformations between
triangulations of point sets and
polygons using flips have been studied mostly in the plane. It is known that
the flip graph of triangulations of point sets and polygons in the plane is
connected and has diameter $O(n^2)$, where $n$ is the
number of points~\cite{HNU:99,Lawson:77}. Recently, \NP-completeness has been
proved for deciding the flip-distance between triangulations of a point
set in the plane~\cite{LP:15} and triangulations of a simple
polygon~\cite{AMP:15}.

Houle et al.~have considered triangulations of point sets in the plane that
admit a perfect matching~\cite{HHNR:05}. They show that any two such
triangulations are connected under the flip operation. For this purpose they
consider the graph of non-crossing perfect matchings, where two matchings are
adjacent if they differ by a single non-crossing cycle (of arbitrary length).
They show that the graph of non-crossing perfect matchings is connected and
conclude from this that any two triangulations that admit a perfect matching
must be connected.  In contrast to their setting, we remove all geometric
requirements, but restrict the length of the cycles allowed for the flip
operation.

\subsection{Our results}
In this paper, we study the complexity of {\sc Perfect Matching
Reconfiguration} from the viewpoint of graph classes.
\figurename~\ref{fig:results} summarizes our results.

Recall that reconfiguration of matchings under the TS and TJ operations can be
solved in polynomial time for any graph~\cite{Ito:11}.  In contrast, we prove
that {\sc Perfect Matching Reconfiguration} is \PSPACE-complete, even for split
graphs, and for bipartite graphs of bounded bandwidth and of maximum degree
five. 
We extend our hardness result to a more general setting,
namely the reconfiguration of $k$-factor subgraphs. 
Furthermore, by adjusting our gadgets appropriately, we show that $k$-\textsc{Perfect Matching Reconfiguration} is
\PSPACE-complete for any even $k \geq 4$. 
Note that this result contrasts with \textsc{TJ-Matching Reconfiguration} problem that can be decided in polynomial time~\cite{Ito:11} and with geometric reconfiguration problems where the reconfiguration operation is a flip (e.g., flips of triangulations~\cite{LP:15}) which usually belong to NP. 


We additionally investigate polynomial-time solvable cases.  We prove that {\sc
Perfect Matching Reconfiguration} admits a polynomial-time algorithm on
strongly orderable graphs (these include interval graphs and strongly chordal
graphs), outerplanar graphs, and cographs (also known as $P_4$-free graphs).
More specifically, we give the following results: 
\begin{itemize}
	\item For strongly orderable graphs, a transformation between two
	  perfect matchings always exists; hence the answer is always \YES.
	  Furthermore, there is a transformation of linear length (i.e., a
	  linear number of flip operations) between any two matchings and
	  such a transformation can be found in polynomial time. 
				
	\item {\sc Perfect Matching Reconfiguration} on outerplanar graphs can
	  be solved in linear time, and we can find a transformation of linear
	  length for a \YES-instance in linear time.  (Note that there are
	  \NO-instance, e.g., long cycles).
	
	\item {\sc Perfect Matching Reconfiguration} on cographs can be solved
	  in polynomial time, and we can find a transformation of linear length
	  for a \YES-instance in polynomial time. (Again, there are \NO-instances).
\end{itemize}

	Proofs of the claims marked with $(\ast)$ and one figure have been moved to Appendices.

\subsection{Notation}

For standard definitions and notations on graphs, we refer the reader
to~\cite{Diestel}. Let $G = (V, E)$ be a simple graph. Two edges are
\emph{independent} if they share no endpoint. A \emph{matching} $M \subseteq E$
of $G$ is a set of pairwise independent edges.  For a vertex set $V^\prime
\subseteq V$, we denote by $G[V^\prime]$ the subgraph of $G$ induced by
$V^\prime$.  For a vertex $v \in V$, we denote by $N(v)$ the
\emph{neighborhood} of $v$, that is, $N(v) := \{w \in V \mid vw \in E\}$. 

Two matchings $M$ and $M^\prime$ of $G$ are \emph{adjacent} if their symmetric
difference $M \vartriangle M^\prime$ induces a cycle of length four. 
We write $M \onestep{G} M^\prime$ if $M$ and $M^\prime$ are adjacent; we 
may omit $G$ if no confusion is possible.
A sequence $M_0, M_1, \ldots, M_q$ of matchings in $G$ is called a
\emph{reconfiguration sequence between $M$ and $M^\prime$} if  $M_0 = M$, $M_q
= M^\prime$, and for $1 \leq i \leq q$, we have $M_{i-1} \onestep{G} M_i$.
We write $M \sevstep{G} M^\prime$ (or simply $M \sevstep{} M^\prime$) if there
is a reconfiguration sequence between $M$ and $M^\prime$. 
The \textsc{Matching Reconfiguration} problem under the flip operation is defined as follows: 
\begin{center}
  \parbox{0.95\hsize}{
    \begin{listing}{{\bf Question:}}
    \item[{\bf Input:}] A simple graph $G$, and two matchings $M_{\ini}$ and $M_{\tar}$ of $G$
    \item[{\bf Question:}] Determine whether $M_{\ini} \sevstep{G} M_{\tar}$ or not.
    \end{listing}}
  \end{center}

\section{PSPACE-completeness} \label{sec:hard}

	In this section, we prove that \textsc{perfect matching reconfiguration} is PSPACE-complete. 
	Interestingly, the problem remains intractable even for bipartite graphs, even though matchings in bipartite graphs satisfy several nice properties. 	
	\begin{theorem} \label{the:bipartite}
		\textsc{Perfect matching reconfiguration} is PSPACE-complete for bipartite graphs whose maximum degree is five and whose bandwidth is bounded by a fixed constant. 
	\end{theorem}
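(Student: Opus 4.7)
The plan is to prove membership in PSPACE by a standard argument (guess the reconfiguration sequence one flip at a time, keeping in memory only the current matching plus a step counter; the length of a sequence may be exponential, but each intermediate state fits in polynomial space, so the problem lies in NPSPACE and hence in PSPACE by Savitch's theorem). The real work lies in PSPACE-hardness subject to bipartiteness, maximum degree five, and bounded bandwidth simultaneously.

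For hardness, I would reduce from a reconfiguration problem already known to be PSPACE-complete on inputs of bounded bandwidth, such as a suitable variant of Nondeterministic Constraint Logic or a bounded-bandwidth SAT-reconfiguration problem (the existence of such starting points has been established in prior reconfiguration work). Given such an instance $I$, I would build a bipartite graph $G$ together with perfect matchings $M_\ini$ and $M_\tar$ by plugging together two families of gadgets: a \emph{variable gadget} for each Boolean variable (or NCL edge) of $I$, having exactly two perfect matchings — a \emph{true} matching and a \emph{false} matching — connected by a single 4-cycle flip; and \emph{clause} (or \emph{vertex-constraint}) gadgets that interface with variable gadgets and whose set of perfect matchings encodes exactly the assignments satisfying the corresponding clause (respectively, the NCL consistency condition at that vertex). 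The intended correspondence is that flips inside a variable gadget simulate toggling a variable, while the clause gadgets \emph{block} any flip that would yield an assignment violating a clause because such a flip would leave some gadget vertex unmatched, i.e. would not produce a perfect matching.

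To meet the structural constraints, I would first design the gadgets ignoring degree and bandwidth, verifying the bijection between source reconfiguration steps and allowed gadget flips, and then refine them. Bipartiteness is natural since flips act on 4-cycles; the gadgets will be built out of even cycles and matched paths. Maximum degree five is enforced by splitting any high-degree interface vertex into a small bipartite chain of degree-bounded vertices, adjusted so that the set of perfect matchings (and the flip adjacencies between them) is preserved. Bounded bandwidth is achieved by choosing a source problem already of bounded bandwidth and laying out the gadgets along the linear ordering of the source instance as a narrow ``strip'': each gadget has constant size, and all edges of $G$ run between variable gadgets that are close in the source ordering, so the bandwidth of $G$ is at most a constant times that of $I$.

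The main obstacle will be designing a single gadget meeting all four requirements at once: (i) bipartite, (ii) maximum degree five, (iii) constant size with only local wiring to preserve bandwidth, and (iv) faithful simulation of the reconfiguration semantics, in the sense that there are no ``shortcut'' flip sequences through the gadget that allow the reconfiguration graph of perfect matchings of $G$ to differ from the reconfiguration graph of $I$. Correctness will be established in two directions: a projection argument showing that any flip sequence from $M_\ini$ to $M_\tar$ in $G$ induces, by reading off the state of each variable gadget, a valid reconfiguration sequence of $I$; and a simulation argument showing that each elementary reconfiguration step of $I$ can be realized by a bounded number of flips in $G$. Combined with the degree, bipartite, and bandwidth bounds, this yields the stated hardness.
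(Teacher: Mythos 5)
Your overall strategy coincides with the paper's: membership via Savitch, and hardness by a reduction from Nondeterministic Constraint Logic restricted to bounded-bandwidth, degree-three machines, with one gadget per NCL edge and one per \textsc{and}/\textsc{or} vertex, orientations encoded by which gadget covers the shared interface vertices, and bandwidth inherited from the source instance. However, as written the proposal has a genuine gap: the entire substance of this proof is the explicit construction of the three gadgets and the verification that (a) every $4$-cycle of $G$ lies inside a single gadget (so no ``cross-gadget'' flips exist), (b) each gadget is internally connected (any two perfect matchings encoding the same orientation are linked by flips inside the gadget), and (c) the gadgets are externally adjacent (flips change the encoded orientation exactly as NCL allows). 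You name these requirements as ``the main obstacle'' but do not resolve them, so the proposal is a plan rather than a proof.

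Moreover, one concrete design decision you commit to would fail. You stipulate that an edge/variable gadget should have ``exactly two perfect matchings --- a true matching and a false matching --- connected by a single $4$-cycle flip.'' An NCL edge gadget must interface with \emph{both} of its endpoint gadgets, and reversing the edge requires changing which gadget covers the connectors at both ends; a single flip is local to four vertices and cannot do this, and a gadget with only two perfect matchings cannot support the intermediate states needed. The paper's edge gadget instead admits a third, \emph{neutral} orientation (covering all four connector pairs itself) through which every reversal passes, and the reverse direction of the correctness proof then has to argue that excursions into neutral states can be projected away. Your projection argument would need this extra case analysis. Finally, your plan to enforce maximum degree five by ``splitting high-degree vertices into chains'' while preserving the set of perfect matchings \emph{and} the flip adjacencies is asserted without justification; such local replacements routinely create or destroy $4$-cycles and change matching parities. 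The paper avoids this entirely by designing each gadget with maximum degree three from the outset, so that a connector shared by an edge gadget (degree two there) and a vertex gadget (degree three there) has degree at most five.
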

	\begin{proof}
	Observe that the problem can be solved in (most conveniently, nondeterministic~\cite{DBLP:journals/jcss/Savitch70}) polynomial space, and hence it is in PSPACE. 
	As a proof of Theorem~\ref{the:bipartite}, we thus prove that the problem is PSPACE-hard for such graphs, by giving a polynomial-time reduction from the \textsc{Nondeterministic Constraint Logic} problem (\textsc{NCL} for short)~\cite{HearnD05}. 

	\paragraph*{Definition of nondeterministic constraint logic.}

	\begin{figure}[b]
	\begin{center}
		\includegraphics[width=0.75\linewidth]{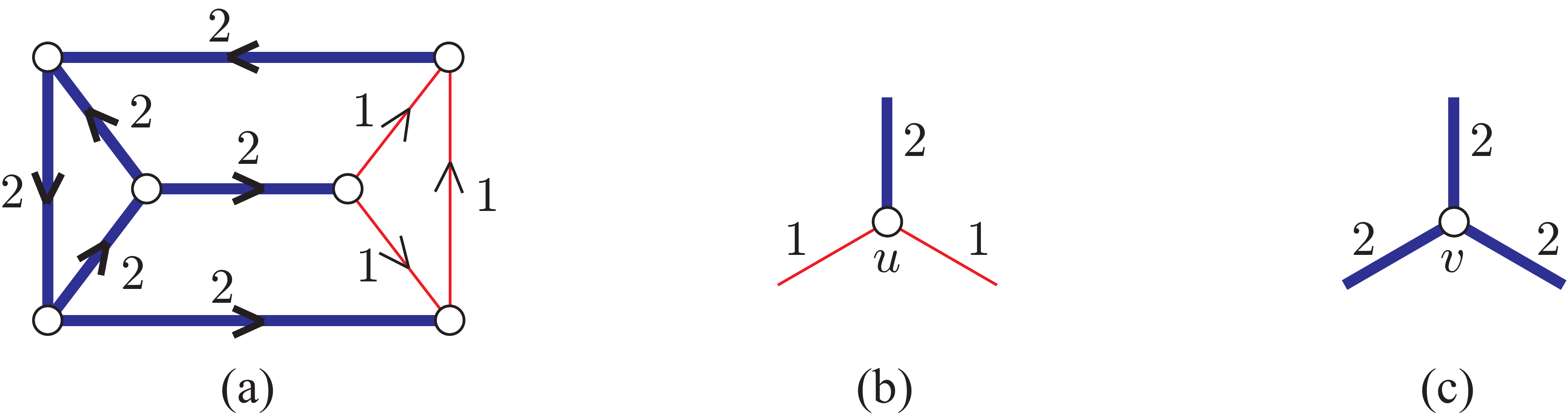}
	\end{center}
	\caption{(a) A configuration of an NCL machine, (b) an NCL \textsc{and} vertex $u$, and (c) an NCL \textsc{or} vertex $v$.}
	\label{fig:ncl}
	\end{figure}

	An NCL ``machine'' is an undirected graph together with an assignment of weights from $\{1,2\}$ to each edge of the graph. 
	An (\emph{NCL}) \emph{configuration} of this machine is an orientation (direction) of the edges such that the sum of weights of in-coming arcs at each vertex is at least two. 
	\figurename~\ref{fig:ncl}(a) illustrates a configuration of an NCL machine, where each weight-$2$ edge is depicted by a (blue) thick line and each weight-$1$ edge by a (red) thin line. 
	Then, two NCL configurations are \emph{adjacent} if they differ in a single edge direction. 
	Given an NCL machine and its two configurations, it is known to be PSPACE-complete to determine whether there exists a sequence of adjacent NCL configurations which transforms one into the other~\cite{HearnD05}. 

	An NCL machine is called an \textsc{and}/\textsc{or} \emph{constraint graph} if it consists of only two types of vertices, called ``NCL \textsc{and} vertices'' and ``NCL \textsc{or} vertices'' defined as follows:
	A vertex of degree three is called an \emph{NCL \textsc{and} vertex} if its three incident edges have weights $1$, $1$, and $2$. 
	(See \figurename~\ref{fig:ncl}(b).)
	An NCL \textsc{and} vertex $u$ behaves as a logical \textsc{and}, in the following sense: 
	the weight-$2$ edge can be directed outward for $u$ only if both two weight-$1$ edges are directed inward for $u$. 
	Note that, however, the weight-$2$ edge is not necessarily directed outward even when both weight-$1$ edges are directed inward. 
	A vertex of degree three is called an \emph{NCL \textsc{or} vertex} if its three incident edges have weights $2$, $2$, and $2$. 
	(See \figurename~\ref{fig:ncl}(c).)
	An NCL \textsc{or} vertex $v$ behaves as a logical \textsc{or}: 
	one of the three edges can be directed outward for $v$ if and only if at least one of the other two edges is directed inward for $v$. 
It should be noted that, although it is natural to think of NCL \textsc{and}/\textsc{or} vertices as having inputs and outputs, there is nothing enforcing this interpretation; 
especially for NCL \textsc{or} vertices, the choice of input and output is entirely arbitrary because an NCL \textsc{or} vertex is symmetric. 
For example, the NCL machine in \figurename~\ref{fig:ncl}(a) is an \textsc{and}/\textsc{or} constraint graph. 
From now on, we call an \textsc{and}/\textsc{or} constraint graph simply an \emph{NCL machine}, and call an edge in an NCL machine an \emph{NCL edge}. 
NCL remains PSPACE-complete even if an input NCL machine is planar, bounded bandwidth, and of maximum degree three~\cite{DBLP:conf/iwpec/Zanden15}.

	\paragraph*{Gadgets.} 
	Suppose that we are given an instance of \textsc{NCL}, that is, an NCL machine and two configurations of the machine. 
	We will replace each of NCL edges and NCL {\sc and}/{\sc or} vertices with its corresponding gadget;
if an NCL edge $e$ is incident to an NCL vertex $v$, then we connect the corresponding gadgets for $e$ and $v$ by a pair of vertices, called {\em connectors} ({\em between $v$ and $e$}) or {\em $(v,e)$-connectors}, as illustrated in \figurename~\ref{fig:subdivision}(a) and (b).
	Thus, each edge gadget has two pairs of connectors, and each {\sc and}/{\sc or} gadget has three pairs of connectors.  
	Our gadgets are all edge-disjoint, and share only connectors. 

	In our reduction, we construct the correspondence between orientations of an NCL machine and perfect matchings of the corresponding graph, as follows:
	We regard that the orientation of an NCL edge $e=vw$ is inward direction for $v$ if the two $(v,e)$-connectors are both covered by (edges in) the {\sc and}/{\sc or} gadget for $v$. 
	On the other hand, we regard that the orientation of $e=vw$ is outward direction for $w$ if the two $(w,e)$-connectors are both covered by the edge gadget for $e$.
%
	\figurename~\ref{fig:gadget} shows our three types of gadgets which correspond to NCL edges and NCL {\sc and}/{\sc or} vertices. 
	We below explain the behavior of each gadget.  

	\begin{figure}[tb]
	\begin{center}
		\includegraphics[width=0.65\linewidth]{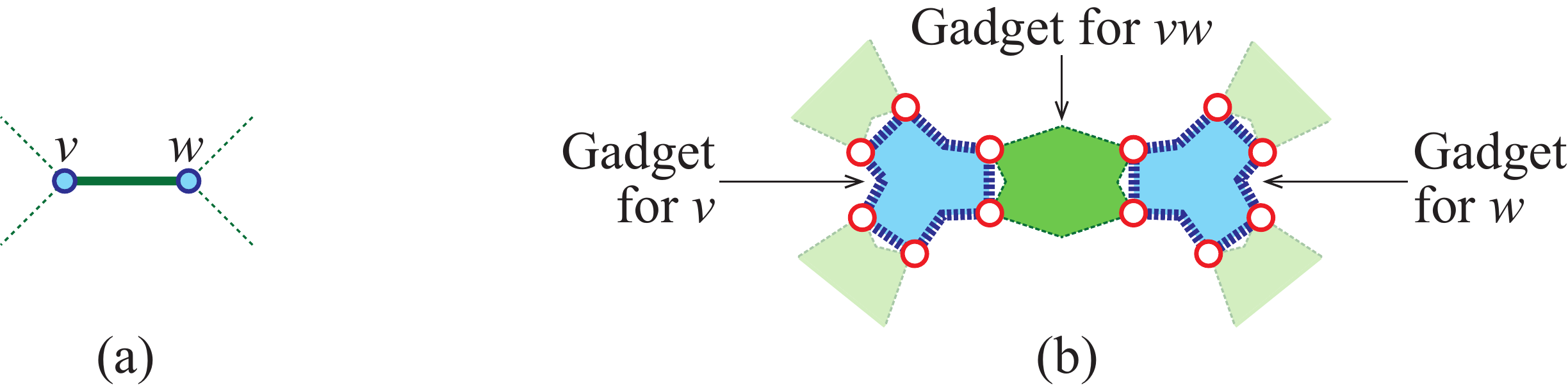}
	\end{center}
	\caption{(a) An NCL edge $vw$, and (b) its corresponding gadgets, where the connectors are depicted by (red) circles.}
	\label{fig:subdivision}
\end{figure}

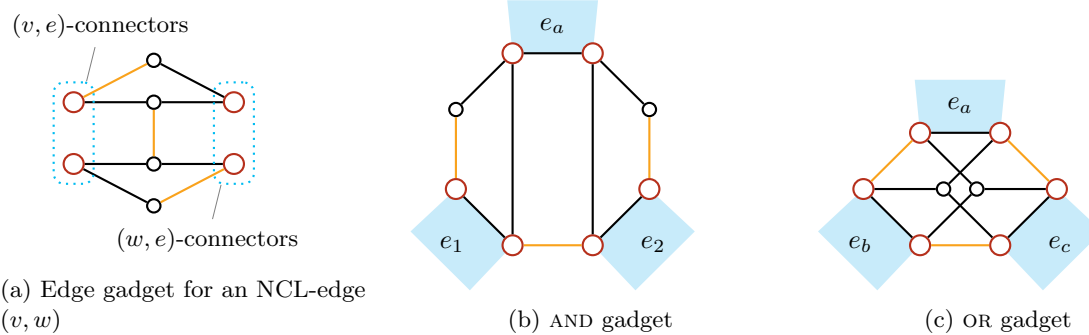
\begin{figure}[tb]
	\begin{center}
	  \begin{subfigure}[b]{0.3\linewidth}
	    \begin{tikzpicture}[node/.append style={node distance=1.5em}]
	      \node[terminal] (lv1) {};
	      \node[terminal,below=1.5em of lv1] (lv2) {};
	      \node[vertex,right of=lv1] (mv2) {};
	      \node[vertex,right of=lv2] (mv3) {};
	      \node[vertex,above=1em of mv2] (mv1) {};
	      \node[vertex,below=1em of mv3] (mv4) {};
	      \node[terminal,right of=mv2] (rv1) {};
	      \node[terminal,right of=mv3] (rv2) {};

	      \draw[thick,YellowOrange] (lv1) -- (mv1);
	      \draw[thick] (mv1) -- (rv1); 
	      \draw[thick,YellowOrange] (rv2) -- (mv4);
	      \draw[thick] (mv4) -- (lv2);
	      \draw[thick] (lv1) -- (mv2) -- (rv1);
	      \draw[thick] (lv2) -- (mv3) -- (rv2);
	      \draw[thick,YellowOrange] (mv2) -- (mv3);

	      \node[draw, dotted, thick, ProcessBlue!80, fit=(lv1) (lv2), rounded corners, pin={[xshift=1em]above:{\small $(v,e)$-connectors}}] {};
	      \node[draw, dotted, thick, ProcessBlue!80, fit=(rv1) (rv2), rounded corners, pin={[xshift=-1em]below:{\small $(w,e)$-connectors}}] {};
	    \end{tikzpicture}
	    \subcaption{Edge gadget for an NCL-edge $(v, w)$}
	  \end{subfigure}
	  \hspace{1em}
	  \begin{subfigure}[b]{0.3\linewidth}
	    \begin{tikzpicture}[node/.append style={node distance=1.5em},fitbox/.style={fill=blue!15, rounded corners,minimum size=1.5em}]
	      \node[terminal] (ta1) {};
	      \node[terminal,right of=ta1] (ta2) {};
	      \node[vertex,below right of=ta2] (rv1) {};
	      \node[vertex,below left of=ta1] (lv1) {};
	      \node[terminal,below of=lv1] (tb1) {};
	      \node[terminal,below right of=tb1] (tb2) {};
	      \node[terminal,below of=rv1] (tc1) {};
	      \node[terminal,below left of=tc1] (tc2) {};

	      \node[draw=none,above=1em of ta1] (da) {};

	      \draw[thick] (ta1) -- node[label=above:$e_a$] {} (ta2) -- (rv1);
	      \draw[thick,YellowOrange] (rv1) -- (tc1);
	      \draw[thick] (tc1) -- node[label=below right:$e_2$] {} (tc2);
	      \draw[thick,YellowOrange] (tc2) -- (tb2);
	      \draw[thick] (tb2) -- node[label=below left:$e_1$] {} (tb1);
	      \draw[thick,YellowOrange] (tb1) -- (lv1);
	      \draw[thick] (lv1) -- (ta1);
	      \draw[thick] (tc2) --  (ta2);
	      \draw[thick] (tb2) -- (ta1);

	      \begin{scope}[on background layer]
		\draw[draw=none,fill=ProcessBlue!20] let 
		    \p1 = (ta1.center),
		    \p2 = (ta2.center),
		    \p3 = (0.2em,2em) in
		    (\x1-\x3,\y1+\y3) -- (\p1) -- (\p2) -- (\x2+\x3,\y2+\y3);

		\draw[draw=none,fill=ProcessBlue!20] let 
		    \p1 = (tb1.center),
		    \p2 = (tb2.center) in
		    (\x1-1.7em,\y1-1.6em) -- (\p1) -- (\p2) -- (\x2-1.6em,\y2-1.7em);

		\draw[draw=none,fill=ProcessBlue!20] let 
		    \p1 = (tc1.center),
		    \p2 = (tc2.center) in
		    (\x2+1.6em,\y2-1.7em) -- (\p2) -- (\p1) -- (\x1+1.7em,\y1-1.6em);
	      \end{scope}
	      
%
%
	    \end{tikzpicture}
	    \subcaption{\textsc{and} gadget}
	  \end{subfigure}
	  \hspace{1em}
	  \begin{subfigure}[b]{0.3\linewidth}
	    \begin{tikzpicture}[node/.append style={node distance=1.5em},fitbox/.style={fill=blue!15, rounded corners,minimum size=1.5em}]
	      \node[terminal] (ta1) {};
	      \node[terminal,right of=ta1] (ta2) {};
	      \node[terminal,below right of=ta2] (tc1) {};
	      \node[terminal,below left of=tc1] (tc2) {};
	      \node[terminal,below left of=ta1] (tb1) {};
	      \node[terminal,below right of=tb1] (tb2) {};

	      \node[vertex,below right of=ta1] (c1) {};
	      \node[vertex,below left of=ta2] (c2) {};

	      \draw[thick] (ta1) -- node[label=above:$e_a$] {} (ta2);
	      \draw[thick,YellowOrange] (ta2) -- (tc1);
	      \draw[thick] (tc1) -- node[label=below right:$e_c$] {} (tc2);
	      \draw[thick,YellowOrange] (tc2) -- (tb2);
	      \draw[thick] (tb2) -- node[label=below left:\textcolor{black}{$e_b$}] {} (tb1);
	      \draw[thick,YellowOrange] (tb1) -- (ta1);
	      \draw[thick] (ta1) -- (c1) -- (tb2);
	      \draw[thick] (ta2) -- (c2);
	      \draw[thick] (c2) -- (tc2);
	      \draw[thick] (tb1) -- (c2);
	      \draw[thick] (tc1) -- (c1);

	      \begin{scope}[on background layer]
		\draw[draw=none,fill=ProcessBlue!20] let 
		    \p1 = (ta1.center),
		    \p2 = (ta2.center),
		    \p3 = (0.2em,2em) in
		    (\x1-\x3,\y1+\y3) -- (\p1) -- (\p2) -- (\x2+\x3,\y2+\y3);

		\draw[draw=none,fill=ProcessBlue!20] let 
		    \p1 = (tb1.center),
		    \p2 = (tb2.center) in
		    (\x1-1.7em,\y1-1.6em) -- (\p1) -- (\p2) -- (\x2-1.6em,\y2-1.7em);

		\draw[draw=none,fill=ProcessBlue!20] let 
		    \p1 = (tc1.center),
		    \p2 = (tc2.center) in
		    (\x2+1.6em,\y2-1.7em) -- (\p2) -- (\p1) -- (\x1+1.7em,\y1-1.6em);
	      \end{scope}
	    \end{tikzpicture}
	    \subcaption{\textsc{or} gadget}
	  \end{subfigure}
	\end{center}
	\caption{Illustrations of the three gadgets. In the {\sc and}/{\sc or}
	gadget, the three light blue parts represent the edge gadgets
	corresponding to the edges incident to the NCL vertex; $e_1$ and
	$e_2$ in the {\sc and} gadget correspond to weight-$1$ edges. The
	orange edges may be subdivided in order to make the gadgets work for
        swaps on longer cycles.}
	\label{fig:gadget}
\end{figure}

	\begin{figure}[tb]
	\begin{center}
		\includegraphics[width=\linewidth]{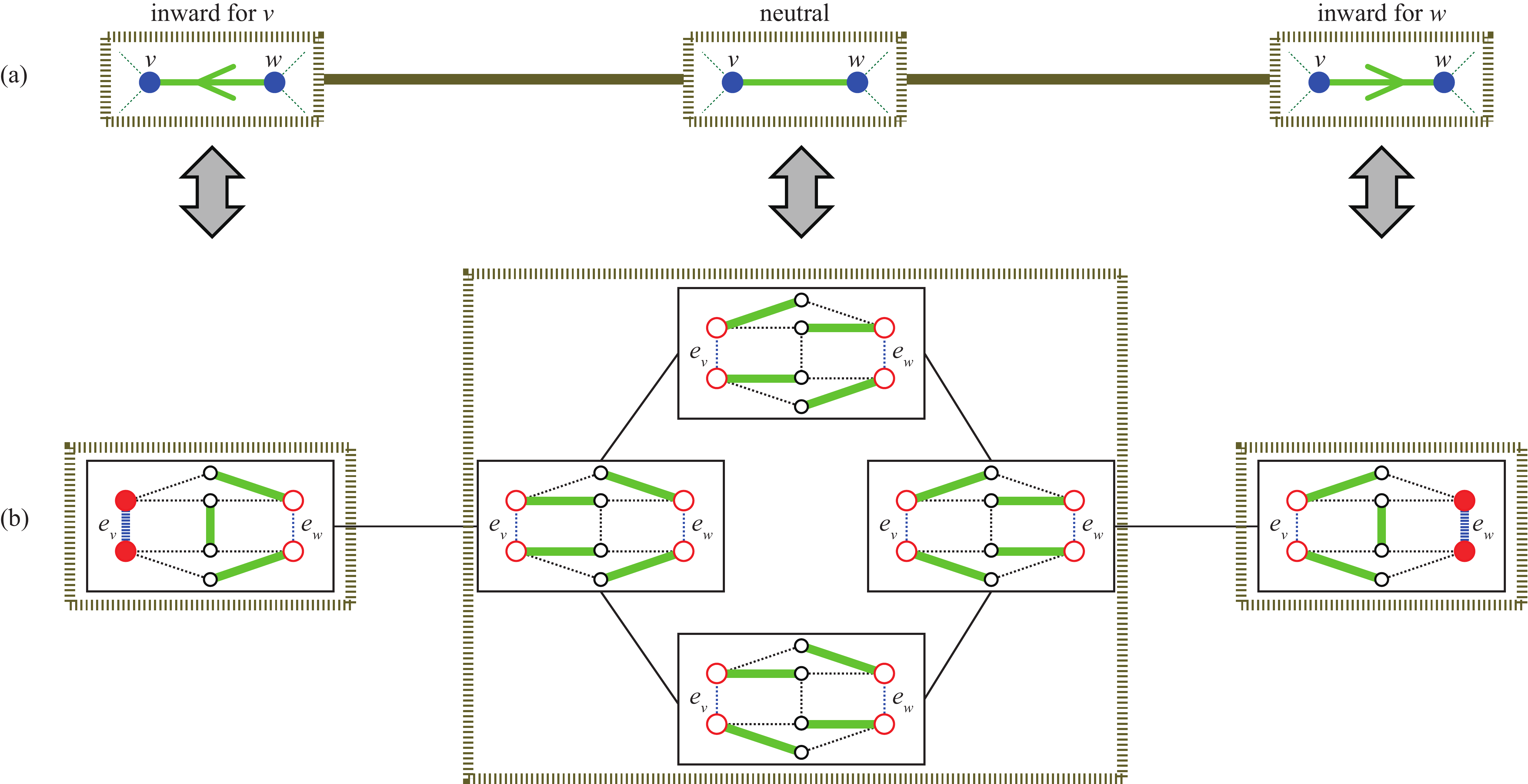}
	\end{center}
	\caption{(a) A reversal of the orientation of an NCL edge $vw$ via its neutral orientation, and (b)~all configurations of the edge gadget. Note that two edges $e_v$ and $e_w$ joining connectors do not belong to the edge gadget, but to the {\sc and}/{\sc or} gadgets for $v$ and $w$, respectively. The inside of each connector is painted (by red) if it is covered by the {\sc and}/{\sc or} gadget for $v$ or $w$.}
	\label{fig:edge_configuration}
\end{figure}

\medskip

\noindent
	{\bf Edge gadget.}
	Recall that, in a given NCL machine, two incident NCL vertices $v$ and $w$ are joined by a single NCL edge $e=vw$. 
	Therefore, the edge gadget for $vw$ should be consistent with the orientations of the NCL edge $vw$, as follows (see also \figurename~\ref{fig:edge_configuration}):
If $(v,e)$-connectors are both covered by the {\sc and}/{\sc or} gadget for $v$ (i.e., the inward direction for $v$), then $(w,e)$-connectors must be covered by the edge gadget for $e$ (i.e., the outward direction for $w$); 
conversely, the $(v,e)$-connectors must be covered by the edge gadget for $e$ if $(w,e)$-connectors are covered by the {\sc and}/{\sc or} gadget for $w$. 
	In particular, the edge gadget must forbid a configuration such that all $(v,e)$- and $(w,e)$-connectors are covered by the {\sc and}/{\sc or} gadgets for $v$ and $w$, respectively (i.e., the inward directions for both $v$ and $w$), because such a configuration corresponds to the direction which illegally contributes to both $v$ and $w$ at the same time. 
	On the other hand, covering all $(v,e)$- and $(w,e)$-connectors by the edge gadget at the same time (i.e., the outward directions for both $v$ and $w$) corresponds to the {\em neutral} orientation~\cite{DBLP:journals/ieicet/OsawaSIZ18} of the NCL edge $e=vw$ which contributes to neither $v$ nor $w$, and hence we simply do not care about such orientations. 

	\figurename~\ref{fig:gadget}(a) illustrates our edge gadget for an NCL edge $e = vw$.
	Then, if all $(v,e)$- and $(w,e)$-connectors are covered by the {\sc and}/{\sc or} gadgets for $v$ and $w$, then no matching can cover all four vertices in the middle (in particular, we cannot cover the top and bottom vertices of the four).  
	Thus, this edge gadget forbids the orientation of $e$ which gives the inward directions for both $v$ and $w$ at the same time. 

	\figurename~\ref{fig:edge_configuration}(b) illustrates valid configurations of the edge gadget for $e=vw$ together with two edges $e_v$ and $e_w$ from the {\sc and}/{\sc or} gadgets for $v$ and $w$, respectively. 
	Each (non-dotted) box represents a valid configuration, and two boxes are joined by an edge if their configurations are adjacent, that is, can be obtained by flipping a single cycle of length four. 
	Furthermore, each large dotted box surrounds all configurations corresponding to the same orientation of an NCL edge $e=vw$. 
	Then, the set of configurations (non-dotted boxes) in each large dotted box induces a connected component; 
this means that any configuration in the set can be transformed into any other without changing the orientation of the corresponding NCL edge; 
this condition is called the ``internal connectedness'' of the gadget~\cite{DBLP:journals/ieicet/OsawaSIZ18}. 
	In addition, if we contract the configurations in the same large dotted box into a single vertex (and merge parallel edges into a single edge if necessary), then the resulting graph is exactly the graph depicted in \figurename~\ref{fig:edge_configuration}(a);
this condition is called the ``external adjacency'' of the gadget~\cite{DBLP:journals/ieicet/OsawaSIZ18}. 
	Therefore, we can conclude that our edge gadget correctly simulates the behavior of an NCL edge.

\medskip

\noindent
{\bf {\sc And} and {\sc or} gadgets.}
	\figurename~\ref{fig:gadget}(b) illustrates our {\sc and} gadget for each NCL {\sc and} vertex $v$,
where $e_1$ and $e_2$ are two weight-$1$ NCL edges and $e_a$ is the weight-$2$ NCL edge.
	\figurename~\ref{fig:and_configuration}(a) (in appendix) illustrates all valid orientations of the three edges incident to $v$, and \figurename~\ref{fig:and_configuration}(b) illustrates valid configurations of the {\sc and} gadget together with (images of) three edge gadgets for $e_1$, $e_2$, and $e_a$. 
	Then, as illustrated in \figurename~\ref{fig:and_configuration} (in the appendix), our {\sc and} gadget satisfies both ``internal connectedness'' and  ``external adjacency'', and  hence it correctly simulates an NCL {\sc and} vertex.


	\figurename~\ref{fig:gadget}(c) illustrates our {\sc or} gadget for each NCL {\sc or} vertex $v$, where $e_a$, $e_b$, and $e_c$ correspond to three NCL edges incident to $v$.
	For an NCL {\sc or} vertex, we need to forbid only one type of orientations of the three NCL edges: all NCL edges $e_a$, $e_b$, and $e_c$ are directed outward for $v$ at the same time, that is, all six connectors are covered by the edge gadgets for $e_a$, $e_b$ and $e_c$. 
	Our {\sc or} gadget forbids such a case, because otherwise we cannot cover the two (white) small vertices in the center.  
	In addition, this {\sc or} gadget satisfies both ``internal connectedness'' and  ``external adjacency'', and  correctly simulates an NCL {\sc or} vertex.

\paragraph*{Reduction.}

	As illustrated in \figurename~\ref{fig:subdivision}, we replace each of NCL edges and NCL {\sc and}/{\sc or} vertices with its corresponding gadget; 
let $G$ be the resulting graph.
	Notice that each of our three gadgets is of maximum degree three, and connectors in the edge gadget are of degree two; thus, $G$ is of maximum degree five. 
	In addition, each of our three gadgets is a bipartite graph such that two connectors in the same pair belong to different sides of the bipartition; therefore, $G$ is bipartite. 
	Furthermore, since NCL remains PSPACE-complete even if an input NCL machine is bounded bandwidth~\cite{DBLP:conf/iwpec/Zanden15}, the resulting graph $G$ is also bounded bandwidth and of maximum degree five;
notice that, since each gadget consists of only a constant number of edges, the bandwidth of $G$ is also bounded. 

	We next construct two perfect matchings of $G$ which correspond to two given NCL configurations $C_\ini$ and $C_\tar$ of the NCL machine. 
	Note that there are (in general, exponentially) many perfect matchings which correspond to the same NCL configuration. 
	However, by the construction of the three gadgets, no two distinct NCL configurations correspond to the same perfect matching of $G$. 
	We arbitrarily choose two perfect matchings $M_\ini$ and $M_\tar$ of $G$ which correspond
to $C_\ini$ and $C_\tar$, respectively. 

	This completes the construction of our corresponding instance of {\sc perfect matching reconfiguration}. 
	The construction can be done in polynomial time.
	Furthermore, the following lemma gives the correctness of our reduction. 
	\begin{lemma}[$\ast$] \label{lem:hardcorrect}
	There exists a desired sequence of NCL configurations between $C_\ini$ and $C_\tar$ if and only if there exists a reconfiguration sequence between $M_\ini$ and $M_\tar$.
	\end{lemma}

This completes the proof of Theorem~\ref{the:bipartite}.
\end{proof}

\subsection*{Remarks.}
	We conclude this section by giving some remarks that can be obtained from Theorem~\ref{the:bipartite}.
	We first prove that the problem remains intractable even for split graphs.
	A graph is {\em split} if its vertex set can be partitioned into a clique and an independent set. 
	\begin{corollary} \label{cor:split}
		\textsc{Perfect matching reconfiguration} is PSPACE-complete for split graphs.
	\end{corollary}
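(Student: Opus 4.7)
The plan is to reduce directly from the bipartite case established in Theorem~\ref{the:bipartite}. Let $(G, M_\ini, M_\tar)$ be an instance of \pmreconf where $G$ is bipartite with bipartition $(A,B)$; since $G$ admits perfect matchings we have $|A|=|B|$. I construct the split graph $G'$ by keeping $V(G)$ and $E(G)$ and adding every missing edge within $A$, so that $A$ induces a clique and $B$ remains an independent set. This transformation is clearly polynomial. I then take the same $M_\ini, M_\tar$ as the perfect matchings in $G'$ and claim that $(G', M_\ini, M_\tar)$ is a yes-instance of \pmreconf iff $(G, M_\ini, M_\tar)$ is.

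The heart of the argument is two structural observations about $G'$. First, \emph{every} perfect matching $M$ of $G'$ actually uses only edges of $G$: if $M$ contained $k \geq 1$ clique edges inside $A$, these would pair up $2k$ vertices of $A$, leaving only $|A|-2k < |B|$ vertices of $A$ to be matched with the independent set $B$, a contradiction since every vertex of $B$ can only be matched to $A$. So the perfect matchings of $G$ and of $G'$ coincide as sets. Second, every flip available in $G'$ is already a flip in $G$: an alternating $4$-cycle on vertices $x_1, y_1, x_2, y_2$ with matching edges $x_1y_1, x_2y_2$ must have both matching edges bipartite (by the first observation), so exactly two of the four vertices lie in $A$ and two in $B$, with the $A$-vertices and $B$-vertices alternating around the cycle. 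Consequently the two non-matching edges each connect $A$ to $B$, and thus lie in $E(G)$. The other combinatorial arrangement (two $A$-vertices adjacent on the cycle) would require an edge inside $B$, which does not exist.

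Combining these two observations, the reconfiguration graphs of $G$ and $G'$ under the flip operation are literally identical on the shared vertex set of perfect matchings. Hence $M_\ini \sevstep{G} M_\tar$ iff $M_\ini \sevstep{G'} M_\tar$, and PSPACE-hardness transfers from Theorem~\ref{the:bipartite}. Membership in PSPACE is inherited from the general case (Savitch), so \pmreconf is PSPACE-complete on split graphs.

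The only step requiring any care is the flip-equivalence: one must verify that the added clique edges cannot create a genuinely new $4$-cycle flip, which is where the bipartite alternating structure is essential. Everything else is a direct counting argument. No gadgetry is needed beyond the trivial edge additions.
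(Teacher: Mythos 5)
Your proposal is correct and follows essentially the same route as the paper: complete one side of the bipartition into a clique and observe that, since $|A|=|B|$ and $B$ stays independent, no added edge can appear in any perfect matching, so the perfect matchings and the flips of $G$ and $G'$ coincide. Your explicit verification that no new alternating $4$-cycle flip is created is a welcome extra detail the paper leaves implicit, but it is the same argument.
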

	\begin{proof}
		By Theorem~\ref{the:bipartite} the problem remains \PSPACE-complete for bipartite graphs. Consider the graph obtained by adding new edges so that one side of the bipartition forms a clique. The resulting graph is a split graph. 
		These new edges can never be part of any perfect matching of the graph. Indeed, since the original graph was bipartite, there must be the same number of vertices on each side of the bipartition. In a perfect matching of the split graph, all the vertices from the independent set must be matched with vertices from the clique, and no vertex from the clique remains to be matched together.
		Thus, the corollary follows. 
	\end{proof}

Let $k$ be an integer. An edge-subgraph $H$ of $G$ is a {\em $k$-factor} if all the vertices of $H$ have degree exactly $k$. 
Thus, a $1$-factor is a perfect matching. Theorem~\ref{the:bipartite} implies the following:

\begin{corollary}[$\ast$] \label{cor:kfactor}
	Let $G$ be a graph and $H_\ini,H_\tar$ be two $k$-factors. Deciding if there is a sequence of flip operations transforming $H_\ini$ into $H_\tar$ is PSPACE-complete. 
\end{corollary}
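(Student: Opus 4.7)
The plan is to reduce from \pmreconf on bipartite graphs, which is PSPACE-complete by Theorem~\ref{the:bipartite}, to $k$-factor reconfiguration for each fixed $k \geq 2$ (the case $k = 1$ is Theorem~\ref{the:bipartite} itself). Given an instance $(G, M_\ini, M_\tar)$ of \pmreconf, I would build an auxiliary graph $G'$ by attaching to each vertex $v \in V(G)$ a rigid ``padding'' structure that saturates exactly $k - 1$ of the $k$ edges incident to $v$ in any $k$-factor of $G'$, leaving exactly one free slot which then encodes the perfect-matching edge at $v$.

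Concretely, for every $v \in V(G)$ I introduce $k - 1$ fresh \emph{buddy} vertices $v^1, \ldots, v^{k-1}$ and add the edges $vv^i$. For each $i \in \{1, \ldots, k-1\}$, I then place an arbitrary $(k-1)$-regular graph $F_i$ on the vertex set $\{v^i : v \in V(G)\}$; such an $F_i$ exists because $|V(G)|$ is even (as $G$ admits a perfect matching). Let $F^\star$ be the union of all buddy edges $vv^i$ together with all $E(F_i)$, and set $H_\ini := M_\ini \cup F^\star$ and $H_\tar := M_\tar \cup F^\star$. Both are $k$-factors of $G'$ because each $v \in V(G)$ uses one edge of $M_\ini$ (resp.\ $M_\tar$) and $k-1$ buddy edges, while each buddy vertex uses one buddy edge and $k-1$ edges of $F_i$.

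To argue correctness I would first observe that every buddy vertex $v^i$ has degree exactly $k$ in $G'$, so \emph{all} edges incident to $v^i$ must appear in every $k$-factor of $G'$. Hence every $k$-factor of $G'$ contains $F^\star$, and the remaining edges at each $v \in V(G)$ form a perfect matching of $G$. This yields a bijection between $k$-factors of $G'$ and perfect matchings of $G$ via $H \mapsto H \cap E(G)$.

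The main obstacle is to show that the flip relation transports cleanly across this bijection. The key point is that any flip in $G'$ uses a 4-cycle with two edges in the current $k$-factor $H$ and two edges outside $H$, but every edge touching a buddy vertex is saturated in every $k$-factor and therefore cannot serve as a non-$H$ edge in any flip. A short case analysis on where the vertices of such a 4-cycle lie---using that the only edges between different layers $\{v^i\}$ and $\{u^j\}$ with $i \neq j$ are absent, and that the only edge leaving a buddy layer goes back to its $V(G)$-vertex---shows that the 4-cycle must in fact be contained in $V(G)$, so the flip in $G'$ restricts exactly to a perfect-matching flip in $G$. Combined with the bijection above, this gives $M_\ini \sevstep{G} M_\tar$ iff $H_\ini \sevstep{G'} H_\tar$, completing the reduction.
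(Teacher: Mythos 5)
Your proposal is correct and follows essentially the same route as the paper: attach to each original vertex a rigid padding of $k-1$ auxiliary vertices of degree exactly $k$, so that every $k$-factor of $G'$ contains all padding edges and restricts to a perfect matching of $G$, and no flip 4-cycle can touch a padding vertex because all its incident edges are forced. Your layer-wise $(k-1)$-regular graphs $F_i$ are just a slightly different (and arguably cleaner) way of wiring the auxiliary vertices so that they reach degree exactly $k$; otherwise the reduction and the correctness argument coincide with the paper's.
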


Finally, we show that it is \PSPACE-complete to decide whether two perfect
matchings are connected by a sequence of flip operations on alternating cycles
of given fixed length.

\begin{corollary}[$\ast$]
  For $k \geq 4$ and $k$ even, the problem $k$-\textsc{Perfect Matching Reconfiguration} is \PSPACE-complete.
  \label{cor:kflip}
\end{corollary}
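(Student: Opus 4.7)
The plan is to adapt the reduction from Theorem~\ref{the:bipartite} so that length-$k$ flips play the role previously played by length-$4$ flips. Membership in \PSPACE is routine: a nondeterministic procedure can guess a reconfiguration sequence step-by-step, verifying each length-$k$ flip in polynomial time. Hence the task is to prove \PSPACE-hardness.

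Write $k = 4 + 2m$ with $m \geq 1$. Starting from the graph $G$ produced by the reduction of Theorem~\ref{the:bipartite}, I construct $G'$ by subdividing each ``orange'' edge in the gadgets of Figure~\ref{fig:gadget} into a path of $2m+1$ edges (that is, by inserting $2m$ new internal vertices). The orange edges have been chosen so that each intended alternating $4$-cycle of the gadgets contains exactly one orange edge; hence after subdivision every such $4$-cycle becomes an alternating cycle of length $(4-1)+(2m+1)=k$ in $G'$. Inserting $2m$ (even) new vertices per orange edge preserves bipartiteness, and because each gadget has constant size and undergoes a constant number of subdivisions, bounded degree and bounded bandwidth are preserved as well. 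Every perfect matching $M$ of $G$ extends canonically to a perfect matching $M'$ of $G'$ by forcing the matching on each subdivided path according to whether its underlying original edge was in $M$ or not; in particular, taking $M'_\ini$ and $M'_\tar$ as the extensions of $M_\ini$ and $M_\tar$ completes the construction.

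The key claim is that the alternating $k$-cycles of $G'$ are in bijection with the intended alternating $4$-cycles of $G$. Every internal subdivision vertex has degree two, so any alternating cycle of $G'$ must traverse each subdivided path it enters in its entirety, with the matching pattern on that path forced. Consequently every alternating cycle of $G'$ projects to an alternating cycle of $G$ that uses some number $j \geq 0$ of orange edges, and the corresponding cycle in $G'$ has length $\ell + 2mj$ where $\ell \geq 4$ is the length of the projected cycle in $G$. Setting $\ell + 2mj = k = 4 + 2m$ with $m \geq 1$ forces $(\ell, j) = (4, 1)$ (as long as $j=0$ can be excluded). Once this is established, Lemma~\ref{lem:hardcorrect} applied to the subdivided gadgets immediately lifts the reduction from $k = 4$ to any even $k \geq 4$.

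The main obstacle is ruling out $j = 0$, i.e.\ verifying that the orange edges form a transversal of every alternating cycle of $G$ of length at most $k$. This reduces to a finite, gadget-local check: one confirms, for each of the edge, \textsc{and}, and \textsc{or} gadgets in Figure~\ref{fig:gadget}, that deleting the orange edges leaves a forest, so no alternating cycle lies entirely inside a single gadget's non-orange portion; and any cross-gadget alternating cycle of length at most $k$ must enter some gadget along a connector and traverse an orange edge on the way out. If an individual gadget turns out to contain a $4$-cycle through two orange edges, one handles it by splitting the two subdivisions asymmetrically (using $2m_1$ and $2m_2$ new vertices with $m_1+m_2=m$) so that the dilated cycle still has length exactly $k$. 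Once this finite verification is carried out, the reduction is complete.
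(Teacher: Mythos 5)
Your proposal is correct and follows essentially the same route as the paper's own (sketched) proof: the paper likewise replaces each orange edge by a path on $k-3$ edges (your $2m+1$ with $k=4+2m$), observes that this preserves the perfect matchings of each gadget, and argues that every alternating cycle of length $k$ meets exactly one orange edge and hence stays within one gadget, so the reduction of Theorem~\ref{the:bipartite} lifts verbatim. Your additional discussion of excluding the $j=0$ case and of asymmetric subdivision is a reasonable elaboration of the finite gadget-local verification the paper leaves implicit.
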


\section{Polynomial-time algorithms} \label{sec:polytime}
In this section, we investigate the polynomial-time solvability of {\sc perfect matching reconfiguration} from the viewpoint of graph classes. 
	\subsection{Strongly orderable graphs} \label{subsec:interval}

Interval graphs form easy instances for many NP-hard problems, and the situation is no different here. 
In fact, we prove that any instance on an interval graph is a \YES-instance. 
Our argument also yields a linear-time algorithm to compute a reconfiguration sequence of a linear number of flip operations between any two perfect matchings.

	For the sake of generality, we consider a wider class of graphs, called strongly orderable graphs. 
	A graph $G=(V,E)$ is \emph{strongly orderable} if there is a \emph{strong ordering} on its vertices, defined as follows:
an order $(v_1,v_2,\ldots,v_n)$ of $V$ such that for every $i,j,k,\ell$ with $i<j$ and $k<\ell$, if all of $v_iv_k, v_iv_\ell$ and $v_j v_k$ are edges, then $v_j v_\ell$ is an edge.
	Note that the class of strongly orderable graphs is hereditary: every induced subgraph of a strongly orderable graph is strongly orderable.

	Our proof strategy for the following theorem is to show that every perfect matching $N$ of a strongly orderable graph $G$ can be transformed into some particular perfect matching $M$ of $G$, called the canonical perfect matching;
then, any two perfect matchings $N$ and $N'$ of $G$ admit a reconfiguration sequence between them via $M$. 
	The \emph{canonical perfect matching} of a graph $G$ with respect to an order $\mathcal{O}=(v_1,v_2,\ldots,v_n)$ is a perfect matching of $G$ (if any) greedily obtained by selecting, among the available edges, the one with endpoints of smallest indices. 
%
	Note that any strongly orderable graph that admits a perfect matching, also admits a canonical perfect matching with respect to a corresponding order on the vertices, see e.g.~\cite{Dragan97}.
	We give the following theorem in this subsection. 
	\begin{theorem}[$\ast$] \label{the:stronglyorder}
		Let $G$ be a strongly orderable graph. 
		Then, there is a reconfiguration sequence of linear length between any two perfect matchings of $G$. 
		Furthermore, such a reconfiguration sequence can be found in linear time if we are given a strong ordering on the vertices of $G$ as a part of the input. 
	\end{theorem}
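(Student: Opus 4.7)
Following the hint in the paper, I would fix a strong ordering $\mathcal{O}=(v_1,\dots,v_n)$ of $G$ and prove the stronger statement that every perfect matching $N$ of $G$ can be transformed into the canonical matching $M$ (with respect to $\mathcal{O}$) by at most $n/2$ flips. The theorem then follows by concatenating the sequence from $N$ to $M$ with the reverse of the sequence from $N'$ to $M$, producing a reconfiguration sequence of linear length between any two perfect matchings $N$ and $N'$.

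The heart of the argument is the following ``correction'' step. Assume $N\neq M$ and let $i$ be the smallest index on which $M$ and $N$ disagree; let $v_j$ be the partner of $v_i$ in $M$ and $v_k$ its partner in $N$, and let $v_\ell$ be the partner of $v_j$ in $N$. Because $M$ and $N$ agree on $v_1,\dots,v_{i-1}$, all four indices are distinct and $\geq i$. By the greediness of the canonical matching, $j$ is the smallest index of an unmatched neighbour of $v_i$ at step $i$, so $j<k$; moreover $i<\ell$ and $i<j$. I would then apply the strong ordering property with the tuple $(i_1,i_2,k_1,k_2)=(i,\ell,j,k)$: the three hypothesised edges $v_iv_j$, $v_iv_k$, $v_\ell v_j$ are present (the first by definition of $M$, the latter two because they belong to $N$), and the indices satisfy $i<\ell$ and $j<k$. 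The strong ordering axiom then yields $v_kv_\ell\in E(G)$, so $v_i v_k v_\ell v_j v_i$ is a $4$-cycle alternating with $N$, and flipping it produces a new perfect matching $N'$ that agrees with $M$ on $v_1,\dots,v_i$. Iterating the procedure terminates in at most $n/2$ steps.

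Two small case checks will complete the analysis: verifying that the four vertices of the flip cycle are indeed distinct (which follows from the minimality of $i$ and the fact that $N$ is a perfect matching), and confirming that the correction strictly extends the common prefix so that termination is guaranteed in linearly many flips. For the algorithmic part, I would maintain the ordering $\mathcal{O}$ in arrays so that the canonical matching $M$ can be built greedily in linear time using, for instance, doubly linked lists of still-unmatched vertices indexed by $\mathcal{O}$; the same structure lets each correction step be performed in amortised constant time, since updating $N$ after a flip only touches a constant number of pointers and the new disagreement index strictly exceeds $i$.

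The main obstacle, and the only place where the structural assumption is used, is the appeal to the strong ordering axiom: one must choose the quadruple $(i,\ell,j,k)$ so that the three required edges are exactly the two edges of $N$ incident to $\{v_i,v_j\}$ together with the canonical edge $v_iv_j$, regardless of the relative positions of $j,k,\ell$. The choice above works uniformly in all three cases $\ell<j$, $j<\ell<k$, and $\ell>k$, so no case distinction is needed in the write-up; I would emphasise this point since it is what makes the proof go through in one shot for the whole class of strongly orderable graphs (including interval and strongly chordal graphs) without appealing to any finer structural property.
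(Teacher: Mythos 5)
Your proposal is correct and takes essentially the same route as the paper's proof: the paper phrases the argument as an induction that repeatedly deletes $v_1$ and its canonical partner, which is exactly your iteration on the first index where $N$ disagrees with the canonical matching $M$, and it invokes the strong-ordering axiom on the same quadruple (the canonical edge at the current vertex together with the two $N$-edges covering its endpoints) to obtain the missing edge of the alternating $4$-cycle. The resulting bound of at most $n/2$ flips to reach $M$, hence a linear-length sequence between any two perfect matchings via $M$, matches the paper.
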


		The natural question regarding Theorem~\ref{the:stronglyorder} is whether a strong ordering can be computed efficiently. 
		In general, Dragan~\cite{dragan} proved that strongly orderable graphs $G=(V,E)$ can be recognized in $O(|V|\cdot(|V|+|E|))$ time, and if so we can obtain its strong ordering in the same running time. 
		However, when restricted to interval graphs, we can obtain a strong ordering in linear time~\cite{hsu}.	
		We thus have the following corollary. 
	\begin{corollary} \label{cor:interval}
	Let $G$ be an interval graph. 
	Then, there is a reconfiguration sequence of linear length between any two perfect matchings of $G$. 
	Furthermore, such a reconfiguration sequence can be found in linear time. 
	\end{corollary}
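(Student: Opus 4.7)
The plan is to deduce Corollary~\ref{cor:interval} directly from Theorem~\ref{the:stronglyorder}, once a strong ordering on the vertices of the interval graph $G$ is available in linear time. First I would invoke the classical fact that every interval graph is strongly orderable: fixing an interval representation of $G$ and ordering the vertices by non-decreasing left endpoint yields an order $v_1,\ldots,v_n$ for which, whenever $v_iv_k$, $v_iv_\ell$ and $v_jv_k$ are edges with $i<j$ and $k<\ell$, a short case analysis on the relative positions of the four intervals forces the intervals of $v_j$ and $v_\ell$ to overlap, hence $v_jv_\ell \in E$. Therefore interval graphs fall within the scope of Theorem~\ref{the:stronglyorder}.

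Second, I would appeal to the linear-time algorithm of Hsu~\cite{hsu} (together, if necessary, with any of the standard linear-time interval graph recognition algorithms that produce an interval model), which outputs a strong ordering of an interval graph $G$ in time $O(|V|+|E|)$. Feeding this ordering into the algorithm guaranteed by Theorem~\ref{the:stronglyorder} then produces, for any two perfect matchings $M_\ini$ and $M_\tar$ of $G$, a reconfiguration sequence of linear length, in linear time. In particular, every instance of \pmreconf{} restricted to interval graphs is a \YES-instance.

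There is essentially no obstacle beyond assembling these two ingredients. The only subtlety worth double-checking is that the preprocessing step truly runs in $O(|V|+|E|)$ time on the raw input graph (rather than assuming an interval model is supplied), which is exactly what the cited linear-time recognition and ordering results provide. Summing the two linear bounds then gives the claimed linear overall running time.
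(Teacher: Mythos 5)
Your proposal is correct and follows exactly the paper's route: interval graphs are strongly orderable, a strong ordering can be obtained in linear time via Hsu's algorithm~\cite{hsu}, and then Theorem~\ref{the:stronglyorder} applies. The extra detail you give (the left-endpoint ordering witnessing strong orderability) is a welcome elaboration of a fact the paper leaves implicit, but the argument is the same.
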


%


\subsection{Outerplanar graphs} \label{subsec:outerplanar}

	In this subsection, we consider outerplanar graphs. 
	Note that there are $\NO$-instances for outerplanar graphs, e.g., induced cycles of length more than four. 
	Nonetheless, we give the following theorem.
	\begin{theorem} \label{the:outerplanar}
		\textsc{Perfect Matching Reconfiguration} can be decided in linear time for outerplanar graphs. 
		Furthermore, if a reconfiguration sequence exists, it can be found in linear time. 

		 Moreover, for a $\YES$-instance, a reconfiguration sequence of linear length can be output in linear time. 
	\end{theorem}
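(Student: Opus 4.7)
The plan is to exploit two structural features of outerplanar graphs: they decompose via the block-cut tree, and each $2$-connected outerplanar block has a tree-shaped weak dual whose nodes are the inner faces. I would combine a reduction to $2$-connected blocks with a tree-based dynamic program on the weak dual.

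First I would reduce to a single $2$-connected outerplanar block. Since any $4$-cycle is $2$-edge-connected, every flip is confined to a single block; moreover, for any cut vertex $v$, the block providing $v$'s matching edge is a flip-invariant. Hence each block may be treated independently, considering both perfect matchings of the block and near-perfect matchings that miss a single cut vertex, and then the block-cut tree stitches the per-block verdicts and reconfiguration sequences together in linear time. Bridge blocks are trivial, so the core case is a $2$-connected outerplanar graph $G$.

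For such a $G$, I would root the weak dual tree $T$ at an arbitrary leaf face and run a bottom-up DP. For each face $F$, let $e_F = u_F v_F$ be the edge shared with $F$'s parent and $G_F$ the subgraph induced by $F$ together with all of its descendants in $T$. The state at $F$ records, for each of the $O(1)$ matching statuses at $\{u_F, v_F\}$ (each endpoint is matched inside $G_F$ to some other vertex, matched outside, or matched to its twin via $e_F$), the flip-equivalence classes of perfect matchings of $G_F$ consistent with that status. The transitions at $F$ merge the children's states through the flips that touch $F$ itself, namely the face flip when $F$ is a $4$-face and the triangle-pair flips between $F$ and a triangular neighbour. The crucial observation is that in an outerplanar graph the only $4$-cycles are inner $4$-faces and symmetric differences of two triangular faces sharing an edge, so these are exactly the flips that can touch $F$, and a constant number of equivalence classes per status suffices. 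The total dual-degree across $T$ is $O(|V(G)|)$, so the DP runs in linear time; a reconfiguration sequence of linear length is produced by first picking a canonical representative per equivalence class and then transforming $M_\ini$ and $M_\tar$ into the same canonical form along $T$ and concatenating the two partial sequences.

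The main obstacle I expect is verifying that a constant-size state per node truly captures reachability. Long faces (length $\geq 5$) tend to enforce rigid parities on their internal edges, while regions of triangular faces permit nonlocal rearrangements through chains of triangle-pair flips, and the proof must interweave these two regimes while preserving an inductive invariant on each subtree; small examples show that $\mathbb{F}_2$-membership of $M_\ini \triangle M_\tar$ in the $4$-cycle subspace of the cycle space is \emph{necessary but not sufficient}, because an isolated perfect matching may refuse any flip even when the symmetric difference decomposes into $4$-cycles. The locality of $4$-cycles in outerplanar graphs should make the induction go through, but formalising the correct invariant — likely a short list of parities attached to long faces and to triangular components, together with a ``flip availability'' flag at the boundary edge — is the delicate part of the argument.
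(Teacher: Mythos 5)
Your block decomposition is sound (every $4$-cycle lies in a single block, and parity forces each cut vertex to be matched into its unique odd component), and your structural observation that the only $4$-cycles of a $2$-connected outerplanar graph are inner $4$-faces and unions of two triangular faces sharing a chord is correct. However, the core of your argument --- the dynamic program on the weak dual --- has a genuine gap that you yourself flag but do not close: the claim that ``a constant number of equivalence classes per status suffices'' is not established, and as stated it is false. If the state at a face $F$ literally records the flip-equivalence classes of perfect matchings of $G_F$ per boundary status, then a subtree containing $m$ long inner faces (length $\geq 6$), each of which carries a frozen, flip-inert choice between its two alternating edge sets, has $2^{m}$ classes, so the state is not of constant size. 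If instead the state only records whether $M_\ini|_{G_F}$ and $M_\tar|_{G_F}$ are flip-equivalent for each status, the transitions break: a flip that straddles the boundary edge $e_F$ (the parent being a $4$-face containing $e_F$, or the parent and $F$ being adjacent triangles) changes the restriction to $G_F$ and its boundary status simultaneously, so a reconfiguration in $G$ induces a walk that repeatedly enters and leaves the configuration space of $G_F$ through different statuses; ``reachability within $G_F$ at a fixed status'' does not compose under the tree merge. Your proposed ``flip availability flag'' gestures at this but you give no invariant that is simultaneously constant-size, correct, and maintainable bottom-up, and this is exactly the delicate part you defer.

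For comparison, the paper avoids the dual tree entirely and uses a much more elementary peeling argument. After splitting at cut vertices and deleting every chord $v_iv_j$ with $|i-j|$ even (such chords lie in no perfect matching), it proves that some two consecutive outer-cycle vertices $v_k,v_{k+1}$ both have degree two. If $v_{k-1}v_{k+2}\notin E$ the edge $v_kv_{k+1}$ lies in no $4$-cycle, hence is frozen, which immediately decides or simplifies the instance; if $v_{k-1}v_{k+2}\in E$ the two vertices are deleted and the matchings are rewritten on $G-\{v_k,v_{k+1}\}$, with a short lemma showing the reduced instance is equivalent. Iterating gives a linear-time decision procedure and a linear-length sequence. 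If you want to salvage your approach, you would need to prove a normal-form lemma playing the role of that reduction --- e.g.\ that within any maximal triangulated region all perfect matchings with fixed boundary behaviour are flip-connected, and that long faces contribute exactly one frozen bit each, which must simply agree between $M_\ini$ and $M_\tar$ --- but that lemma is essentially the whole theorem and is not supplied by your sketch.
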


	We give such an algorithm as a proof of Theorem~\ref{the:outerplanar}.
	Suppose we are given a simple outerplanar graph $G=(V,E)$, and two perfect matchings $M_{\ini}$ and $M_{\tar}$ in $G$.
	We may assume that $G$ is connected as we can consider each connected component separately. 
	
	If $|V|=2$, then $M_{\ini} = M_{\tar} = E$, and hence the instance is trivially a $\YES$-instance. 
	Suppose that $G$ is not $2$-connected and has a cut vertex $v \in V$, that is, 
	$G-\{v\}$ consists of more than one connected component. 
	Since $|V|$ is even, there exists a vertex subset $X \subseteq V \setminus \{v\}$ inducing a connected component of $G - \{v\}$ 
	such that $|X|$ is odd. 
	Then, any perfect matching in $G$ contains an edge connecting $v$ and $X$. 
	This shows that we can consider two subgraphs $G_1:=G[X\cup \{v\}]$ and $G_2:=G - (X \cup \{v\})$, separately. 
	That is, we output ``$\YES$'' if $(G_i, M_{\ini} \cap E_i, M_{\tar}\cap E_i)$ is a $\YES$-instance for $i=1, 2$, where $E_i$ is the edge set of $G_i$,
	and output ``$\NO$'' otherwise. Thus, in what follows, we may assume that $G$ is $2$-connected. 

	Since $G$ is outerplanar and $2$-connected, all the vertices are on the outer boundary cycle. 
	Suppose that the vertices $v_1, v_2, \dots , v_n$ appear in this order along the cycle. 
	For simplicity, we denote $v_{n+1}=v_1$, $v_{n+2}=v_2$, and $v_0=v_n$. 
	If there exists a pair of indices $i, j \in \{1, 2, \dots , n\}$ such that $|i-j|$ is even and $v_i v_{j} \in E$, then we can remove $v_i v_j$ from $G$, 
	because it cannot be contained in any perfect matching of $G$. Indeed, the subgraph induced by $v_{i+1},\ldots,v_{j-1}$ is disconnected from the rest of the graph if $v_i,v_j$ are deleted and contains an odd number of vertices.
	In particular, after this change, $v_i v_{i+2} \not\in E$ for any $i \in \{1, 2, \dots , n\}$. 

	We now show the following lemma. 
	\begin{lemma}[$\ast$]\label{lem:degtwo}
	If $v_i v_{i+2} \not\in E$ for any $i \in \{1, 2, \dots , n\}$, then
	there exists an index $k \in \{1, 2, \dots , n\}$ such that both $v_k$ and $v_{k+1}$ have degree two.
	\end{lemma}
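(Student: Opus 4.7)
The plan is to exploit the rigid structure of chords in a 2-connected outerplanar graph (non-crossing chords inside the unique Hamilton cycle) and an extremal argument on chord length.

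First I would handle the trivial case: if $G$ has no chords at all, then $G$ is exactly the cycle $v_1 v_2 \cdots v_n v_1$, every vertex has degree $2$, and any choice of $k$ works. Otherwise, $G$ has at least one chord. Among all chords of $G$, I would pick a chord $e = v_i v_j$ (with indices taken modulo $n$, say $i < j$) that minimizes the ``inner arc length'' $j - i$, i.e.\ the number of cycle edges on the shorter side between $v_i$ and $v_j$.

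The key structural step is to show that inside that minimal arc, the vertices $v_{i+1}, v_{i+2}, \dots, v_{j-1}$ are incident to no chords at all. Any chord $f$ incident to some $v_\ell$ with $i < \ell < j$ would have its other endpoint either outside the arc $[v_i, v_j]$ or inside it. It cannot lie strictly outside, since $G$ is outerplanar, so its chords are non-crossing and $f$ cannot cross $e$. Hence the other endpoint of $f$ also lies in $\{v_i, v_{i+1}, \dots, v_j\}$. But then $f$ is a chord whose inner arc length is strictly less than $j - i$, contradicting the minimality of $e$. Since by hypothesis $v_i v_{i+2} \notin E$, the minimal arc length satisfies $j - i \geq 3$, so there are at least two consecutive ``interior'' vertices $v_{i+1}$ and $v_{i+2}$.

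Having established that $v_{i+1}$ and $v_{i+2}$ are incident to no chord, their only neighbors in $G$ are their two cycle neighbors (which exist since $j - i \geq 3$ means $v_i, v_{i+2}$ and $v_{i+1}, v_{i+3}$ are all distinct and inside $[v_i, v_j]$). Hence both have degree exactly two, and $k := i+1$ gives the desired index. The main obstacle, and the point requiring the most care, is the non-crossing argument in the previous paragraph: one must rule out a chord from $v_\ell$ to some vertex outside the arc using outerplanarity, and rule out a chord inside the arc using the extremal choice of $e$. Everything else is bookkeeping.
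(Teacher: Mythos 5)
Your proof is correct and follows the natural innermost-chord argument (equivalently, picking a leaf of the weak dual tree of the outerplanar embedding), which is the expected route for this lemma. The only nit is that the quantity you minimize should be the length of the \emph{shorter} of the two arcs determined by a chord rather than literally $j-i$ (these differ when $j-i>n/2$), and the two interior vertices should be taken from that shorter arc; with that reading every step, including the non-crossing and extremality arguments, goes through.
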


	Let $k$ be an index such that both $v_k$ and $v_{k+1}$ have degree two, and let $e= v_k v_{k+1}$. 
	We consider the following two cases separately. 
	\begin{description}
	\item[Case 1:] We first consider the case with $v_{k-1} v_{k+2} \not\in E$. 
		In this case, we can see that $e$ is not contained in any cycles of length four, and hence $e$ does not appear in the transformation.  
		Thus, if $v_k v_{k+1} \in M_{\ini} \triangle M_{\tar}$, then we can immediately conclude that 
		$(G, M_{\ini}, M_{\tar})$ is a $\NO$-instance. 
		If $v_k v_{k+1} \not\in M_{\ini} \cup M_{\tar}$, then we remove $e$ from the instance and repeat the procedure. 
		If $v_k v_{k+1} \in M_{\ini} \cap M_{\tar}$, then we remove $v_k$ and $v_{k+1}$ together with their incident edges from the instance and repeat the procedure. 
	\item[Case 2:] We next consider the case with $v_{k-1} v_{k+2} \in E$. 
		Note that if a perfect matching in $G$ does not contain $e$, then it has to contain both $v_{k-1} v_k$ and $v_{k+1} v_{k+2}$, 
		because $v_k$ and $v_{k+1}$ have degree two. 
		In this case, define $M'_{\ini} := M_{\ini} \setminus \{ e \}$ if $e \in M_{\ini}$ and
		$M'_{\ini} := (M_{\ini} \setminus \{v_{k-1} v_k, v_{k+1} v_{k+2} \}) \cup \{v_{k-1} v_{k+2} \}$ otherwise. 
		We also define $M'_{\tar} := M_{\tar} \setminus \{ e \}$ if $e \in M_{\tar}$ and
		$M'_{\tar} := (M_{\tar} \setminus \{v_{k-1} v_k, v_{k+1} v_{k+2} \}) \cup \{v_{k-1} v_{k+2} \}$ otherwise. 	
		Let $G' := G - \{v_k, v_{k+1}\}$. 
		Then, we solve a new smaller instance $(G', M'_{\ini}, M'_{\tar})$. 
	\end{description}
	In either case, we reduce the original instance to a smaller instance, 
   which shows that our algorithm runs in polynomial time. 
	The correctness of Case 1 is obvious. The correctness of Case 2 is guaranteed by the following lemma. 
	
	\begin{lemma}[$\ast$]\label{lem:reduction01}
		$(G, M_{\ini}, M_{\tar})$ is a $\YES$-instance if and only if $(G', M'_{\ini}, M'_{\tar})$ is a $\YES$-instance. 
	\end{lemma}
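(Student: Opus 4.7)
The key observation is that since $v_k$ and $v_{k+1}$ each have degree two in $G$, every perfect matching $M$ of $G$ falls into exactly one of two types: either $e = v_kv_{k+1} \in M$ (``type A''), or $\{v_{k-1}v_k,\, v_{k+1}v_{k+2}\} \subseteq M$ (``type B''). The definitions in the lemma thus describe a natural map $\phi\colon \mathrm{PM}(G) \to \mathrm{PM}(G')$ satisfying $\phi(M_{\ini}) = M'_{\ini}$ and $\phi(M_{\tar}) = M'_{\tar}$. Since $v_{k-1}v_{k+2} \in E(G)$, the 4-cycle $v_{k-1}v_kv_{k+1}v_{k+2}$ exists in $G$, and whenever a perfect matching $M' \in \mathrm{PM}(G')$ has two preimages under $\phi$ (which happens iff $v_{k-1}v_{k+2} \in M'$), these two preimages are flip-adjacent in $G$ via this 4-cycle.

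For the forward direction, I would take a reconfiguration sequence $M_{\ini} = N_0, N_1, \ldots, N_q = M_{\tar}$ in $G$, apply $\phi$ to each $N_i$, and argue that consecutive images satisfy $\phi(N_i) = \phi(N_{i+1})$ or $\phi(N_i) \onestep{G'} \phi(N_{i+1})$. Each flip-cycle $C$ used in step $i$ either contains $v_k$ --- in which case, by the degree-two property of $v_k, v_{k+1}$, it must be the cycle $v_{k-1}v_kv_{k+1}v_{k+2}$, so $N_i$ and $N_{i+1}$ lie in the same fiber of $\phi$ --- or avoids both $v_k$ and $v_{k+1}$. In the latter case, a short check shows $C$ is also disjoint from $\{v_{k-1}, v_{k+2}\}$ whenever $\phi$ modifies matching edges there (since the matching edges $v_{k-1}v_k$ and $v_{k+1}v_{k+2}$ at those vertices cannot appear in $C$), so $C$ is a valid flip-cycle in $G'$. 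After removing duplicates, the resulting sequence witnesses $M'_{\ini} \sevstep{G'} M'_{\tar}$.

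For the converse, I would lift a reconfiguration sequence $M'_{\ini} = N'_0, \ldots, N'_q = M'_{\tar}$ in $G'$ step by step, starting with $N_0 = M_{\ini}$. Maintaining the invariant $\phi(N_i) = N'_i$, I examine the flip-cycle $C'$ of the step $N'_i \to N'_{i+1}$: if $C'$ does not involve the edge $v_{k-1}v_{k+2}$, the same cycle is a flip-cycle in $G$ and can be applied directly to $N_i$; otherwise $C'$ uses $v_{k-1}v_{k+2}$ as a matching or non-matching edge, and depending on the type (A or B) of $N_i$, either $C'$ can be applied directly in $G$, or one must first perform an auxiliary flip on $v_{k-1}v_kv_{k+1}v_{k+2}$ in $G$ to switch the type of $N_i$ and then apply $C'$. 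After $q$ lifted steps we arrive at some $N_q$ with $\phi(N_q) = M'_{\tar}$; if $N_q \neq M_{\tar}$ then they lie in the same fiber of $\phi$ and one final flip on $v_{k-1}v_kv_{k+1}v_{k+2}$ completes the transformation. The main obstacle is precisely this last case analysis in the converse direction, ensuring at each step that the matching edges required by the lifted flip are actually present in $N_i$; the existence of the edge $v_{k-1}v_{k+2}$, together with the degree-two condition on $v_k, v_{k+1}$, is exactly what makes the auxiliary adjustment flip available whenever it is needed.
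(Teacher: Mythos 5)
Your proposal is correct and is essentially the intended argument: the degree-two condition on $v_k,v_{k+1}$ forces every perfect matching of $G$ to be of type A or type B, the projection $\phi$ sends flips of $G$ to flips or identities in $G'$ (since any $4$-cycle through $v_k$ or $v_{k+1}$ must be $v_{k-1}v_kv_{k+1}v_{k+2}$, which acts within a fiber), and flips of $G'$ lift to $G$ after at most one auxiliary flip on that same $4$-cycle. The case analysis you sketch for the lifting (including the final fiber-correcting flip when $N_q\neq M_{\tar}$) is exactly the verification needed, and all the required matching edges are indeed present in each case.
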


	By the above arguments, we obtain a polynomial-time algorithm for \textsc{Perfect Matching Reconfiguration} in outerplanar graphs.
	A pseudocode of our algorithm is shown in Algorithm~\ref{alg1}, which we denote $\AlgOP$ (that stands for Perfect Matching Reconfiguration in Outerplanar Graphs). 
	In the pseudocode, let \AlgOP($G, M_{\ini}, M_{\tar}$) denote the output of $\AlgOP$ when the input consists of $G, M_{\ini}$, and $M_{\tar}$. 
	Although this pseudocode simply outputs $\YES$ or $\NO$, it can be modified so that it actually finds a reconfiguration sequence. 

    To make the running time linear, we implement each step carefully and give the following lemma.
    \begin{lemma}[$\ast$] \label{lem:outerplanarlinear}
    	\textup{$\AlgOP$} can be implemented so that it runs in linear time. 
    \end{lemma}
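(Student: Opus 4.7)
The plan is to implement $\AlgOP$ with $O(n+m)$ preprocessing and $O(1)$ amortized work per reduction; since outerplanar graphs satisfy $m = O(n)$ and each reduction permanently removes at least one edge or two vertices, the total number of reductions is $O(n)$ and the total running time is $O(n)$.

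The first step will be a preprocessing phase. I would use a linear-time block--cut tree construction to split $G$ into its biconnected components and deal with each block separately via an explicit stack, so that the recursive splits implicit in the cut-vertex step cost $O(n+m)$ overall. For each 2-connected block I would compute an outerplanar embedding and read off its outer cycle $v_1,\dots,v_n$ in linear time using a standard outerplanarity routine. A single pass over the edges of the block then flags every chord $v_iv_j$ with $|i-j|$ even; since such an edge cannot belong to any perfect matching, I either answer $\NO$ (if the edge lies in $M_\ini\cup M_\tar$) or delete it.

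Within each cleaned block I would run the main loop while maintaining three dynamic data structures: the current outer cycle as a doubly linked list, an array of current degrees, and a FIFO queue $Q$ of pairs $(v_k,v_{k+1})$ whose endpoints both have degree two. Lemma~\ref{lem:degtwo} guarantees that $Q$ is nonempty whenever the block is nontrivial, so popping a pair I can inspect the constant-size neighborhoods of $v_k$ and $v_{k+1}$ to decide between Case~1 and Case~2 (by testing whether $v_{k-1}v_{k+2}\in E$) and whether $v_kv_{k+1}$ belongs to $M_\ini$, $M_\tar$, both, or neither. The resulting local edit deletes at most two vertices and a constant number of edges, modifies the degrees of only $v_{k-1}$ and $v_{k+2}$, relinks $O(1)$ pointers in the linked list, and pushes at most $O(1)$ new candidates onto $Q$. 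For a \YES-instance I would append at most one flip per reduction to the output sequence, giving a reconfiguration sequence of length $O(n)$ that is emitted in linear time.

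The main obstacle will be the Case~1 subcase in which $v_{k-1}v_{k+2}\notin E$ and $v_kv_{k+1}\in M_\ini\cap M_\tar$: deleting two consecutive outer vertices from a 2-connected block may create new cut vertices and fragment the block. The cleanest fix I see is to rebuild block structure lazily: treat $v_{k-1}$ and $v_{k+2}$ as cyclically adjacent in the linked list without inserting an edge, and only trigger a local re-decomposition when a surviving vertex becomes pendant or isolated. Each such re-decomposition can be charged to the two vertices deleted during the step that created it, and since each vertex and edge is charged $O(1)$ times across the whole run, the total cost remains $O(n)$.
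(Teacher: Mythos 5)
Your overall architecture --- an up-front block--cut decomposition, a doubly linked list for the outer cycle, a degree array, and a queue of consecutive degree-two pairs, with the observation that there are only $O(n)$ reductions --- is a sensible skeleton, and you have correctly isolated the real difficulty: as written, $\AlgOP$ restarts from the $2$-connectivity test after every reduction, and a Case~1 deletion of $v_k,v_{k+1}$ with $v_{k-1}v_{k+2}\notin E$ can destroy $2$-connectivity, so the naive recursion is quadratic. (Case~2 is harmless, since there $v_{k-1}v_{k+2}\in E$ closes the outer cycle again and the block stays $2$-connected.)

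The fix you propose for that difficulty does not work as stated, and it is exactly where the content of the lemma lies. Consider the cycle $v_1v_2\cdots v_{10}$ with the two (odd, non-crossing) chords $v_1v_4$ and $v_5v_8$, and perfect matchings both containing $v_9v_{10}$; the pair $(v_9,v_{10})$ is a legitimate Case~1 reduction, and after deleting $v_9,v_{10}$ the graph is the path $v_1\cdots v_8$ plus those two chords. No vertex is pendant or isolated, so your trigger never fires, yet $v_4$ and $v_5$ are now cut vertices (with odd components $\{v_1,v_2,v_3\}$ and $\{v_6,v_7,v_8\}$) at which the algorithm is supposed to split, and your linked list records $v_8$ and $v_1$ as cyclically adjacent although $v_8v_1\notin E$, so the queue may hand you the bogus ``pair'' $(v_8,v_1)$ with no edge $e$ to examine. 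If you instead simply keep running the degree-two reduction on the list, you are executing a different algorithm: Lemma~\ref{lem:degtwo} and the correctness of the Case~1/Case~2 analysis (Lemma~\ref{lem:reduction01}) are established for a $2$-connected graph whose vertices lie on an outer \emph{cycle}, and neither the existence of a valid adjacent degree-two pair nor the meaning of $v_{k-1}$ and $v_{k+2}$ survives once list-adjacency and graph-adjacency diverge. Finally, even where your trigger does fire, charging the re-decomposition to the two deleted vertices is unjustified: determining the new blocks and, crucially, the parity of the components of $G-v$ for a newly created cut vertex $v$ (the split must send $v$ to an odd component) is not a pointer-local $O(1)$ operation. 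To close the gap you must either prove an invariant guaranteeing that the graph handed to each reduction step is still of the form covered by Lemmas~\ref{lem:degtwo} and~\ref{lem:reduction01} (for instance by re-proving them for paths-plus-chords and specifying how forced pendant edges are peeled), or give a genuine amortized analysis of the dynamic re-splitting. As it stands, the $O(1)$ amortized cost per reduction --- the entire content of the lemma --- is asserted rather than established.
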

	
	This completes the proof of Theorem~\ref{the:outerplanar}.\qed
	
	\begin{algorithm}[h]
	\SetKwInOut{Input}{Input}\SetKwInOut{Output}{Output}
	\Input{A simple outerplanar graph $G=(V,E)$, and two perfect matchings $M_{\ini}$ and $M_{\tar}$ in $G$.}
	\Output{``$\YES$'' if $M_{\ini} \sevstep{G} M_{\tar}$, and ``$\NO$'' otherwise.}
	\If{$|V|=2$}{
		Return ``$\YES$''.
	} 
	\If{$G$ is not $2$-connected} 
	{ \label{alg1:divide:graph:start}
		Divide $G$ into two graphs $G_1=(V_1, E_1)$ and $G_2=(V_2, E_2)$. \\  
		Return ``$\YES$'' if  \AlgOP($G_i, M_{\ini} \cap E_i, M_{\tar}\cap E_i$)=$\YES$ for $i=1, 2$, and return ``$\NO$'' otherwise. 
		\label{alg1:divide:graph:end}}
	Suppose that the vertices $v_1, v_2, \dots , v_n$ appear in this order on the boundary cycle. \\
	\While{There exists an edge $v_{i} v_{j} \in E$ such that $|i-j|$ is even}
	{ \label{alg1:remove:evenedge:start}
		Remove the edge $v_{i} v_{j}$ from $G$.
		\label{alg1:remove:evenedge:end} } 	
	Find an edge $e= v_k v_{k+1}$ such that both $v_k$ and $v_{k+1}$ have degree two.  \label{alg1:find:vk:vk1}\\
	\If{$v_{k-1} v_{k+2} \not\in E$}
	{
		\If{$e \in M_{\ini} \triangle M_{\tar}$}
		{Return ``$\NO$''.}
		\If{$e \not\in M_{\ini} \cup M_{\tar}$}
		{Return \AlgOP($G - e, M_{\ini}, M_{\tar}$).}
		\If{$e \in M_{\ini} \cap M_{\tar}$}
		{Return \AlgOP($G - \{v_k, v_{k+1}\}, M_{\ini} \setminus \{e\}, M_{\tar} \setminus \{e\}$).}
	}
	\If{$v_{k-1} v_{k+2} \in E$}
	{
		Define $M'_{\ini} := M_{\ini} \setminus \{ e \}$ if $e \in M_{\ini}$ and
		$M'_{\ini} := (M_{\ini} \setminus \{v_{k-1} v_k, v_{k+1} v_{k+2} \}) \cup \{v_{k-1} v_{k+2} \}$ otherwise. \\
		Define $M'_{\tar} := M_{\tar} \setminus \{ e \}$ if $e \in M_{\tar}$ and
		$M'_{\tar} := (M_{\tar} \setminus \{v_{k-1} v_k, v_{k+1} v_{k+2} \}) \cup \{v_{k-1} v_{k+2} \}$ otherwise. \\
		Return \AlgOP($G - \{v_k, v_{k+1}\}, M'_{\ini}, M'_{\tar}$).
	}
	\caption{\AlgOP}
	\label{alg1}
\end{algorithm}

	\subsection{Cographs} \label{subsec:cograph}
\renewcommand{\NP}{NP}
\renewcommand{\PSPACE}{PSPACE}
\newcommand{\Onotation}{O}
\newcommand{\symdiff}{\mathop{\bigtriangleup}}
\newcommand{\gmreconf}{\textsc{general matching reconfiguration}\xspace}

We now consider the complexity of \pmreconf when the input graph is a cograph. Cograph are graphs without a path on four vertices as an induced subgraph.

As examples concerning reconfiguration on this class of graphs, it is known
that the problems \textsc{independent set reconfiguration} and \textsc{Steiner
	tree reconfiguration} can be decided efficiently on
cographs~\cite{BB:14,Bonsma:16,MIZ:17}, while they are \PSPACE-complete for
general graphs~\cite{Ito:11,MIZ:17}. 
Theorem~\ref{the:bipartite} together with the following result show that the situation is similar for \pmreconf.

\begin{theorem} \label{the:cograph}
	\pmreconf\ on cographs can be decided in polynomial time. 
	Moreover, for a \YES-instance, a reconfiguration sequence of linear length can be output in polynomial time. 
\end{theorem}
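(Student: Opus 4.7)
The plan is to induct on the cotree of $G$, which has linear size and is computable in linear time. Every cograph is either a single vertex (which admits no perfect matching), a disjoint union $G_1 \cup G_2$, or a join $G_1 + G_2$ of smaller cographs. At each internal node of the cotree I decide reachability between the restrictions of $M_\ini$ and $M_\tar$ to the corresponding induced subgraph, and produce a flip sequence whose pieces will combine into a linear-length reconfiguration for $G$.

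The union case $G = G_1 \cup G_2$ is direct: any perfect matching of $G$ is the disjoint union of perfect matchings of $G_1$ and $G_2$, and every flip is confined to a single connected component. Thus $(G, M_\ini, M_\tar)$ is a \YES-instance iff $(G_i, M_\ini \cap E(G_i), M_\tar \cap E(G_i))$ is a \YES-instance for both $i \in \{1, 2\}$, and the two sub-sequences are simply concatenated.

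The main case is a join $G = G_1 + G_2$, where I exploit the complete bipartite structure between $V(G_1)$ and $V(G_2)$. Any perfect matching $M$ of $G$ decomposes uniquely as $M = M^{(1)} \cup M^{(2)} \cup M^{(c)}$, where $M^{(i)} = M \cap E(G_i)$ and $M^{(c)}$ is a set of cross edges matching $A(M) \subseteq V(G_1)$ with $B(M) \subseteq V(G_2)$, $|A(M)| = |B(M)|$. Three families of flips are available: \emph{cross transpositions}, which swap two cross edges $u_1 u_2, v_1 v_2$ via the purely cross $4$-cycle they span (so any bijection $A(M) \to B(M)$ is realizable); \emph{conversions}, which swap a pair $\{u_1 v_1, u_2 v_2\}$ with $u_i v_i \in M^{(i)}$ for the cross pair $\{u_1 u_2, v_1 v_2\}$ via the $4$-cycle $u_1 v_1 v_2 u_2$; and \emph{slides}, mixed $4$-cycles that move a cross endpoint along an internal edge of $G_i$. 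I would show that every perfect matching can be reconfigured in $O(|V(G)|)$ flips to a canonical form determined by a few invariants of $M^{(1)}$ and $M^{(2)}$ (together with their flip-equivalence classes in $G_1$ and $G_2$). Two matchings of $G$ are then flip-equivalent iff their canonical forms coincide, which reduces to flip-equivalence of residual internal matchings in the smaller cographs $G_i[S_i]$, and induction closes.

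The main obstacle is pinpointing those invariants correctly in the join case. \NO-instances do arise: for instance, take $G = (K_2 \cup K_2) + 2K_1$. Every perfect matching of $G$ cross-matches the entire $2K_1$-side, and since $G_2 = 2K_1$ has no internal edges, no conversion move applies; the single $G_1$-internal edge remaining in a matching must be one of the two $K_2$-edges, and no slide can move it between the two components of $G_1$ (no $4$-cycle crosses them), producing two flip-equivalence classes. The argument therefore has to track which connected components of $G_i$ carry internal matching edges and determine, using the available slides and mixed-cycle flips, when those choices can be altered. Once the structural claim is proven, the algorithm runs in polynomial time (each of the linearly-many cotree nodes is handled in polynomial time), and the output sequence has linear length because each node contributes $O(|V(G_i)|)$ flips, which telescope across the cotree.
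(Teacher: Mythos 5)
Your skeleton (cotree recursion, trivial union case, join case driven by the complete bipartite structure between the two sides) matches the paper's, and your \NO-instance $(K_2\cup K_2)+2K_1$ correctly illustrates why the join case is delicate. But the proof has a genuine gap exactly where you flag it: the ``canonical form determined by a few invariants'' is never exhibited, and the recursion you propose is not well-founded as stated. You want to reduce to flip-equivalence of ``residual internal matchings in $G_i[S_i]$'', where $S_i$ is the set of vertices of $G_i$ matched internally; but $S_i$ is not an invariant of the flip-equivalence class (your conversions and slides change it), so the residual instances are not determined by the input, and recursing into \emph{both} sides forces you to track how the two residual instances interact through the cross edges. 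Nothing in the proposal pins this down.

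The paper resolves this with two ideas you are missing. First, it generalizes the problem to matchings that need not be perfect and enlarges the move set with a \emph{sliding move} (symmetric difference a path of length $3$): this is necessary because the matching induced on one side of the join by a perfect matching of $G$ is generally not perfect, and your mixed $4$-cycles project exactly to sliding moves on that side. Second, it recurses into only \emph{one} side. Taking the root partition $A,B$ with $|A|\ge|B|$, it asks whether some size-$k$ matching has an edge inside $G[B]$ (condition~\ref{it+edge}) or leaves a vertex of $B$ unmatched (condition~\ref{it+freevert}); if either holds, the reconfiguration graph on size-$k$ matchings is connected with linear diameter (Lemmas~\ref{lem+cclifedge} and~\ref{lem+freeB}), so no invariant is needed, and otherwise every size-$k$ matching saturates $B$ entirely by cross edges, and Lemma~\ref{lem+nofreenoedge} shows the instance is equivalent to the induced (non-perfect) instance on $G[A]$ alone, using that cross transpositions realize any permutation of $B$. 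Without the non-perfect generalization, the one-sided recursion, and a proof of connectivity in the two ``easy'' cases, your argument does not go through.
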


We will use the following recursive characterization of cographs.
\begin{itemize}
	\item A graph consisting of a single vertex is a cograph.
	\item If $G$ and $H$ are cographs, then their disjoint union is a cograph, that is, the graph with the vertex set $V(G) \cup V(H)$ and the edge set $E(G) \cup E(H)$ is a cograph. 
	\item If $G$ and $H$ are cographs, then their complete join is a cograph, that is, the graph with the vertex set $V(G) \cup V(H)$ and the edge set $E(G) \cup E(H) \cup \{vw \mid v \in V(G), w \in V(H) \}$ is a cograph.
\end{itemize}
From this characterization of cographs, we can naturally represent a cograph $G$ by a binary tree, called a cotree of $G$, defined as follows:
a {\em cotree} $T$ of a cograph $G$ is a binary tree such that each leaf of $T$ is labeled with a single vertex in $G$, and each internal node of $T$ has exactly two children and is labeled with either ``union'' or ``join'' labels.
Such a cotree of a given cograph $G$ can be constructed in linear time~\cite{CPS:85}.
Each node of $T$ corresponds to a subgraph of $G$ which is induced by all vertices corresponding to all the leaves of $T$ that are the descendants of the node in $T$; thus, the root of $T$ corresponds to the whole graph $G$. 
The cotree of $G$ is not necessarily unique but the following two properties do not depend on the
choice of a cotree $T$. First, a non-trivial cograph is connected if and only if the root of 
$T$ is a join-node.  Furthermore, two vertices of a cograph
are joined by an edge if and only if their first common ancestor in $T$ is a join-node.

The main idea of the theorem is to decompose the graph, and apply the algorithm recursively on each of the components. In order to get a transformation of linear length using this method, we need to extend our problem to non-perfect matchings. Since the set of vertices matched by a matching does not change when performing a flip, we need to add some other operation. We will consider in this section that two matchings are adjacent if their symmetric difference is either a cycle of length four, or a path of length $3$. Note that this second type of transition we added corresponds to the token sliding model: we are allowed to replace an edge of the matching by any other incident edge. We will call this operation a \emph{sliding move}. We consider reconfiguration in this more general setting. \\

\vbox{
	\noindent\gmreconf \\
	\indent \textbf{Input:} Graph $G$, two matchings $M_{\ini}$ and $M_{\tar}$ of $G$.\\
	\indent \textbf{Question:} Is there a sequence of flips and sliding moves that transforms $M_{\ini}$ into $M_{\tar}$? \\
}

Note that the answer to the problem is clearly \NO when the two matchings do not have the same size. We can also remark that when the two input matchings are perfect, sliding moves become useless (since sliding requires at least one non-matched vertex), and we get back the original problem. In particular, Theorem~\ref{the:cograph} is a special case of the following more general result.
\begin{theorem} \label{the:gmcograph}
	\gmreconf on cographs can be decided in polynomial time. 
	Moreover, for a \YES-instance, a reconfiguration sequence of linear length can be output in polynomial time.
\end{theorem}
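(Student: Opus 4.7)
My plan is to prove Theorem~\ref{the:gmcograph} by induction on the size of the cotree $T$ of $G$, which can be constructed in linear time~\cite{CPS:85}. The base case, in which $G$ is a single vertex, is trivial since the only matching is empty. For the inductive step, let the root of $T$ have two children representing cographs $G_1$ and $G_2$, and split on whether the root is a union- or a join-node.

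If the root is a union-node then $G = G_1 \cup G_2$, and since no edge of $G$ joins $V(G_1)$ to $V(G_2)$, every matching decomposes as $M = (M \cap E(G_1)) \sqcup (M \cap E(G_2))$ and every flip or slide operates on one side only. Hence $M_\ini \sevstep{G} M_\tar$ iff $M_\ini \cap E(G_i) \sevstep{G_i} M_\tar \cap E(G_i)$ for both $i \in \{1, 2\}$; both subproblems are handled by induction, and their certificate sequences concatenate without interference to one of length linear in $|V(G)|$.

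The core of the argument is the join case, where $G = G_1 \otimes G_2$ contains every $V(G_1)$--$V(G_2)$ edge. I would exploit three basic moves: any two crossings can be flipped along the alternating $4$-cycle they span in the complete bipartite subgraph on $V(G_1) \cup V(G_2)$; any pair consisting of one internal $G_1$-edge $uu'$ and one internal $G_2$-edge $vv'$ can likewise be flipped (along the $4$-cycle with edges $uu', u'v, vv', v'u$) into a pair of crossings; and any internal $G_i$-edge can be slid onto a crossing as soon as the opposite side contains an unmatched vertex. These moves let me drive an arbitrary matching $M$ of size $k$ toward a canonical form $C_k$ in which the number of crossings is maximized and the remaining internal parts on each side are a canonical matching of $G_1$ (resp.\ $G_2$) of the prescribed size. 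Reachability between $M_\ini$ and $M_\tar$ then reduces to equality of size profiles together with reachability of the canonical internal parts inside $G_1$ and $G_2$, which is settled by the inductive hypothesis.

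The main obstacle is the tight regime in which $k$ is close to the maximum-matching size of $G$, so that internal edges are forced on one or both sides: the sizes of the canonical internal matchings become rigid and the recursion into $G_1$ or $G_2$ must agree with the union-/join-structure one level deeper in the cotree. To handle this uniformly, I would maintain, at each cotree node $v$, a polynomial-size table of attainable profiles, essentially recording how many vertices of $V(G_v)$ are matched and how these are split between the children of $v$, computed bottom-up by merging the two children according to the union or join rule. Reachability is then read off from the top-level table, and a linear-length reconfiguration sequence is extracted by tracing back through the table, ensuring that each cotree node contributes only $O(|V(G_v)|)$ flip and slide operations so that the total length is $O(|V(G)|)$.
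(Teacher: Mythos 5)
Your high-level architecture matches the paper's: induct over the cotree, dispose of union nodes by working componentwise, and concentrate on the join node, where the difficulty is the ``tight regime.'' But the core of your join-case argument has a genuine gap. You propose to drive every matching of size $k$ to a canonical form $C_k$ maximizing crossings; this presupposes that the reconfiguration graph of size-$k$ matchings of $G$ is connected, which is exactly what can fail in the tight regime. The paper instead isolates two conditions on the \emph{instance} (does \emph{some} size-$k$ matching contain an edge inside the smaller side $B$, or leave a vertex of $B$ unmatched?), checkable by maximum-matching computations. When either holds, connectivity with linear diameter is proved directly (this is already nontrivial: the paper's Lemma~\ref{lem+cclifedge} needs a five-part structural claim followed by a cycle analysis to rule out long alternating cycles in the symmetric difference). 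When neither holds, every size-$k$ matching rigidly matches all of $B$ into $A$, and the paper proves an \emph{equivalence}: $M_\ini \sevstep{G} M_\tar$ if and only if $M_\ini^A \sevstep{G[A]} M_\tar^A$. The ``only if'' direction --- that every flip or slide in $G$ projects to a valid move (or to the identity) on $G[A]$ --- is essential for correctness of a \NO{} answer and is entirely absent from your plan; without it you cannot certify non-reachability.

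Your proposed fix for the tight regime, a bottom-up table of attainable size profiles, does not repair this: such a table records which profiles are realizable, but two matchings with identical profiles can lie in different components of the reconfiguration graph (because the induced matchings on $A$ may be stuck relative to each other inside $G[A]$). Reachability is not a function of profiles. The correct recursion is on the specific induced matchings $M_\ini^A$ and $M_\tar^A$ in $G[A]$, justified by the equivalence above, not on canonical representatives --- you cannot canonicalize the internal part on $A$ without already knowing that the relevant component of $G[A]$'s reconfiguration graph is connected, which is the question you are recursing to answer. To salvage your approach you would need to (i) state and verify the two global conditions that license canonicalization, (ii) prove the linear-diameter connectivity lemmas under those conditions, and (iii) prove the two-directional reduction to $G[A]$ when they fail; at that point you have reconstructed the paper's proof.
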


We start by considering certain base cases for which a transformation of linear length always exists and can be computed efficiently.
Let $M_\ini$ and $M_\tar$ be two matchings of a cograph $G$, and let $T$ be a cotree of
$G$. For the purpose of transforming $M_\ini$ to $M_\tar$, we may assume that $G$ is
connected, so the root of $T$ is a join-node (otherwise we can simply apply the algorithm to each connected component of $G$). We will call \emph{root partition} the partition of the vertices of $G$ into $A$ and $B$ corresponding to all the leaves in respectively the left and right subtrees of the root of $T$.
All along this section, unless otherwise specified, $A$ and $B$ will denote the root partition of $G$, and $k$ denotes the size of the matchings we are considering, i.e., $k = |M_\ini| = |M_\tar|$.
Note that $A$ is complete to $B$, because the root of $T$ is a join-node. 
Without loss of generality we may assume that $|A| \geq |B|$.
For a vertex subset $X$ of $G$ and an edge $e$ of $G$, we say that $G[X]$ \emph{contains} $e$ if both endpoints of $e$ 
are in $X$.

Given a connected cograph $G$ with root partition $A,B$, we will consider the two following conditions:

\begin{enumerate}[label=\textbf{(C\arabic*)}]
	\item\label{it+edge} There exists a matching $M$ of $G$ of size $k$ such that $G[B]$ contains an edge of $M$.
	
	\item\label{it+freevert} There exists a matching $M$ of $G$ of size $k$ such that at least one vertex of $B$ is not matched in $M$.
\end{enumerate}
When one of these two conditions holds, then we will be able to show directly that the reconfiguration graph on matchings of size $k$ is connected, and has linear diameter. On the other hand, if none of the two condition holds, we will apply induction on $G[A]$ in order to conclude. The case where condition \ref{it+edge} holds is treated in Lemma~\ref{lem+cclifedge} and condition~\ref{it+freevert}  is handled in Lemma~\ref{lem+freeB}. The case where none of the two properties hold is treated in Lemma~\ref{lem+nofreenoedge}.

Note that it is possible to check in polynomial time whether one of the two conditions holds since computing a maximum matching in a graph can be done in polynomial time. Indeed, a simple algorithm to check condition~\ref{it+edge} consists in trying all possible edges in $G[B]$, removing both endpoints from the graph, and search for a matching of size $k-1$ in the remaining graph. Similarly, for condition~\ref{it+freevert}, we can try to remove every possible vertex from $B$, and search for a matching of size $k$ in the remaining graph.

Remark that the connected components of $M_\ini \symdiff M_\tar$ can be of three types: single edges, paths of length at least $3$, and even cycles. If $M_\ini$ and $M_\tar$ are perfect matchings, then only even cycles can occur in the symmetric difference. We start with the following observation that finding a reconfiguration sequence is easy if the symmetric difference contains no cycle.

\begin{lemma}
	\label{lem+cyclefree}
	Let $G$ be a cograph, and $M_\ini$ and $M_\tar$ be two matchings of size $k$ such that $M_\ini \symdiff M_\tar$ contains no cycle. Then there is a transformation of length at most $2|M_\ini \symdiff M_\tar|$ from $M_\ini$ to $M_\tar$.
\end{lemma}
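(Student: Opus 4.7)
Since $M_\ini$ and $M_\tar$ are matchings, every vertex has degree at most $2$ in $M_\ini \symdiff M_\tar$, so the symmetric difference decomposes as an edge-disjoint union of paths and cycles; the hypothesis of no cycles leaves a disjoint union of paths $P_1, \ldots, P_r$. On each such path, edges alternate between $M_\ini$- and $M_\tar$-edges, and each endpoint is matched in exactly one of the two matchings by its unique incident path-edge. My plan is to describe a procedure that uses at most $2$ operations per edge of the symmetric difference, processing paths one at a time and pairing up the odd-length components.

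For an even-length path $P = v_0 v_1 \cdots v_{2p}$, whose two endpoints are matched in different matchings, I process $P$ from the endpoint that is free in the current matching $M_c$. If $v_{2p-1} v_{2p} \in M_\tar$ then $v_{2p}$ is free in $M_c$, so the sliding move that replaces $v_{2p-2} v_{2p-1}$ by $v_{2p-1} v_{2p}$ is valid and removes both these edges from $M_c \symdiff M_\tar$. After the slide the vertex $v_{2p-2}$ becomes free, so the same step can be iterated on the shortened path. This resolves the $2p$ symmetric-difference edges of $P$ using $p$ sliding moves, comfortably within the budget of $4p$; applied to every even component, this already handles the case where all paths have even length.

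The remaining case of odd-length path components (including single-edge components) requires a pairing argument, since each such component has an excess of one matching over the other while flips and slides preserve the overall matching size. Because $|M_\ini| = |M_\tar|$, the $M_\ini$-heavy and $M_\tar$-heavy components match up in number, and I pair them. For each pair $(P, Q)$, select an endpoint $u$ of $P$ (matched in $M_\ini$ by its path-edge, hence its partner is the next path vertex) and an endpoint $w$ of $Q$ (matched in $M_\tar$ only, hence free in $M_c$); since $G$ is a cograph, any two vertices in the same connected component are at distance at most $2$, so either $uw \in E(G)$ or $u$ and $w$ share a common neighbor. A sequence of at most two sliding moves then reroutes the excess $M_\ini$-edge incident to $u$ onto $w$, reducing the paired components to a disjoint union of even paths that are handled by the procedure above. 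Summing the contributions yields a transformation of length at most $2 |M_\ini \symdiff M_\tar|$.

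The principal obstacle is verifying that the bridging slides between paired components are valid, i.e., that the intermediate edges are not already used by the current matching and do not create a conflict at the common neighbor. I would address this by processing pairs sequentially (so that the current matching is always explicit), by choosing the direct edge $uw$ when it exists, and otherwise by selecting a common neighbor among the many candidates provided by the join-node at the root of the cotree whose incident edges are disjoint from those touched by subsequent pairs.
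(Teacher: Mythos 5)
Your overall strategy is close in spirit to the paper's: shorten each path component of the symmetric difference by sliding from its free endpoint, and use the diameter-two property of connected cographs to bridge between components carrying an excess edge. The path-sliding part is correct. The genuine gap is precisely the point you flag at the end and do not resolve: to bridge two odd components through a common neighbor you need that neighbor to be \emph{unmatched} in the current matching, and none of your proposed remedies guarantees this. Take $G$ to be the join of $2K_2$ (with edges $u_\ini v_\ini$ and $u_\tar v_\tar$) with a $K_2$ on $\{w,w'\}$, and let $M_\ini=\{u_\ini v_\ini,\,ww'\}$ and $M_\tar=\{u_\tar v_\tar,\,ww'\}$. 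The symmetric difference consists of the two isolated edges $u_\ini v_\ini$ and $u_\tar v_\tar$, there is no edge between them, and \emph{every} common neighbor of a vertex of one and a vertex of the other (namely $w$ and $w'$) is blocked by the common edge $ww'\in M_\ini\cap M_\tar$. So your two-slide bridge does not exist, and choosing the common neighbor cleverly among the join candidates cannot help, since all of them are covered by an edge that no other pair of components ever touches.

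The paper closes exactly this case with a four-slide manoeuvre that your proposal is missing: when the common neighbor $w$ is matched to some $w'$ in the current matching, first slide $ww'$ to $wu_\tar$ and then to $u_\tar v_\tar$, and only afterwards slide $u_\ini v_\ini$ to $u_\ini w$ and on to $ww'$. This temporarily displaces the blocking edge, installs the target edge, and restores $ww'$, decreasing the symmetric difference by $2$ in at most $4$ steps, which is exactly what the bound $2|M_\ini\symdiff M_\tar|$ requires. (The paper also sidesteps your pairing bookkeeping entirely by inducting on the size of the symmetric difference: any path with at least two edges is shortened by one slide at its free end, so that eventually only isolated edges remain and the bridging argument is applied to one $M_\ini$-edge and one $M_\tar$-edge at a time.) Adding this displacement step is necessary to make your bridging case go through.
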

\begin{proof}
	The result is proved by induction on the size of the symmetric difference $M_\ini \symdiff M_\tar$. If the symmetric difference is zero, then $M_\ini = M_\tar$ and the result trivially holds. Otherwise, we only need to show that we can reduce the symmetric difference by $2$ in at most $4$ steps. 
	First assume that $M_\ini \symdiff M_\tar$ contains a path of length $t\geq 3$. Let $x_1, \ldots x_t$ be the vertices of this path, and assume without loss of generality that $x_1x_2$ is an edge in $M_\ini$ and $x_2 x_3$ an edge in $M_\tar$. By definition, $x_1$ is not matched in $M_\tar$. Consequently, starting from $M_\tar$ we can slide the edge $x_2x_3$ to $x_1x_2$, and reduce the symmetric difference by $2$ in one step.

	Now assume that the symmetric difference does not contain any path of length at least $3$. Since by assumption it does not contain any cycles either, then it must be a disjoint union of edges. Let $e_\ini=u_\ini v_\ini$ be an edge in $M_\ini\setminus M_\tar$, and $e_\tar=u_\tar v_\tar$ an edge in $M_\tar \setminus M_\ini$ (none of these sets is empty since $M_\tar$ and $M_\ini$ have the same size).  First assume that there is an edge incident to both~$e_\ini$ and~$e_\tar$. Without loss of generality, we can assume that this edge is $u_\ini u_\tar$. In this case, we can simply slide $u_\ini v_\ini$ to $u_\ini u_\tar$ and then to $u_\tar v_\tar$, giving a transformation of length at most $2$.
	Otherwise, since $G$ is a cograph, the two edges must be at distance at most $2$. Hence, we can assume that there is a vertex $w$ adjacent to both $u_\ini$ and $u_\tar$. We consider the two following cases:
	\begin{itemize}
		\item $w$ is not matched in $M_\ini$. In this case, starting from $M_\ini$, we can simply slide $u_\ini v_\ini$ to $u_\ini w$ and then to $wu_\tar$ and finally to $u_\tar v_\tar$;
		\item $w$ is matched to $w'$ in $M_\ini$. In this case, we start by sliding $ww'$ to $wu_\tar$ and then to $u_\tar v_\tar$. Then, we can slide $u_\ini v_\ini $ to $u_\ini w$ and finally $ww'$.
	\end{itemize}
	In any case, we can reduce the symmetric difference by $2$ in at most $4$ steps, which proves the result.
\end{proof}

The following lemma provides a way to construct a transformation sequence of linear length when condition~\ref{it+edge} holds. Note that the proof is constructive and can be easily turned into a polynomial time algorithm computing this sequence.

\begin{lemma}
	\label{lem+cclifedge}
	Let $G$ be a connected cograph with $n$ vertices, and $k \geq 0$ such that condition~\ref{it+edge} holds. 
	Then, there is a reconfiguration sequence of length $O(n)$ between any two matchings of size $k$ in $G$.
\end{lemma}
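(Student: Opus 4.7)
Fix a matching $M^*$ of size $k$ satisfying~\ref{it+edge}, so that $M^*$ contains an edge $e^* = b_1 b_2$ with $b_1, b_2 \in B$. By symmetry of the reconfiguration relation, it suffices to show that every matching $N$ of size $k$ can be transformed into $M^*$ in $O(n)$ moves; composing two such transformations gives the bound for any pair of matchings.

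The plan proceeds in two phases. In Phase~1 I will transform $N$ in $O(1)$ moves into a matching $N_1$ of size $k$ with $e^* \in N_1$, by case analysis on the status of $b_1, b_2$ in $N$. The join structure, which makes every vertex of $A$ adjacent to both $b_1$ and $b_2$, gives enough flexibility: if $e^*$ is already present, nothing to do; if exactly one of $b_1, b_2$ is unmatched, a single slide of an incident edge of $N$ introduces $e^*$; if both are matched to $x, y$ with $xy \in E$, a single flip on the $4$-cycle $b_1 x y b_2$ introduces $e^*$; otherwise ($b_1, b_2$ both matched and $xy \notin E$) a preliminary slide along an edge of $N$ through a common neighbor in $A \cup (B \setminus \{b_1, b_2\})$ frees $b_1$ or $b_2$ and reduces to the previous case, with the degenerate configurations handled by direct inspection.

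In Phase~2, both $N_1$ and $M^*$ contain $e^*$, so the symmetric difference $D = N_1 \symdiff M^*$ is contained in $G - \{b_1, b_2\}$ and decomposes into paths and even cycles. If $D$ is cycle-free, Lemma~\ref{lem+cyclefree} immediately produces a reconfiguration of length at most $2|D| = O(n)$ and we are done. Otherwise, I plan to destroy each cycle of $D$ using a bounded number of flips per unit of cycle length, exploiting that cographs are $P_4$-free: any cycle of length at least six in $G$ has a chord, and a suitable chord enables an alternating $4$-cycle flip that shortens the cycle in $D$ by two. When the usable chords of a cycle $C$ form only triangles, I plan to pivot through $b_1, b_2$ using the join edges $b_1 v, b_2 v$ for $v \in V(C)$, temporarily flipping $e^*$ out of the matching, performing the shortening flip on $C$, and then restoring $e^*$, all at a cost of $O(1)$ extra moves per cycle. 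Summing over all cycles of $D$ (whose total length is $O(n)$) yields $O(n)$ flips for cycle destruction, after which Lemma~\ref{lem+cyclefree} completes the transformation in $O(n)$ further moves.

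The main obstacle I foresee is the cycle-breaking argument in Phase~2 when a cycle of $D$ lies entirely inside $G[A]$ (or $G[B]\setminus\{b_1,b_2\}$) and its only chords produce triangles rather than a ready-made distance-$3$ chord. Here the cograph property still forces further chords (otherwise an induced $P_4$ would appear), and cataloguing which chord patterns can coexist on the cycle, together with verifying that the pivoting through $b_1, b_2$ always exhibits a shortening flip at $O(1)$ amortized cost, constitutes the technical core of the proof.
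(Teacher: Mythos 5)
Your overall architecture --- fix the $B$-edge $e^*=b_1b_2$, reduce to a cycle-free symmetric difference, and finish with Lemma~\ref{lem+cyclefree} --- is in the right spirit, but two of its load-bearing steps have genuine gaps. First, Phase~1. You want to insert $e^*$ into an arbitrary matching $N$ of size $k$ in $O(1)$ moves, and in the hard case (both $b_1,b_2$ matched, partners $x,y$ nonadjacent) you invoke ``a preliminary slide''. But the lemma must apply to perfect matchings, where no vertex is free and sliding moves are impossible; and since $xy\notin E$ forces $x,y\in A$ (any $A$--$B$ pair is adjacent) with $x,y$ in different components of $G[A]$, the obvious one- or two-flip repairs (rerouting $b_1$'s partner through another matching edge) land you on another vertex of $A$ that still need not be adjacent to $y$. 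Creating $b_1b_2$ in this situation amounts to resolving the alternating component of $N\,\triangle\,M^*$ through $b_1b_2$, which is exactly the cycle-elimination problem your Phase~2 is supposed to solve --- so Phase~1 cannot be dismissed as ``direct inspection'' without becoming circular. The paper sidesteps this entirely: it keeps the edge $e$ only in the \emph{target} matching $M$ (after normalizing $M$ so that $e$ is its unique $B$-edge) and never tries to force $e$ into the source matching; instead it strips all $B$-edges out of the source (Assumption~1 of Claim~\ref{clm+cograph+enum}).

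Second, Phase~2. The chord-based shortening is both incomplete (as you concede) and unnecessary: once a pivot edge with both ends in $B$ is available, a cycle $C$ of the symmetric difference lying inside $A$ can be eliminated wholesale in about $|C|$ flips by flipping the pivot onto $C$ and walking around it, since $b_1,b_2$ are adjacent to every vertex of $A$ --- this is precisely the paper's treatment of Assumption~2 of Claim~\ref{clm+cograph+enum}, and it already yields the linear bound without any chord analysis. The real difficulty, which your sketch does not address, is cycles of the symmetric difference that meet $B\setminus\{b_1,b_2\}$: there the pivot walk breaks down because $b_1,b_2$ need not be adjacent to the $B$-vertices of the cycle, and ensuring that the symmetric difference avoids $b_1,b_2$ does not prevent such mixed cycles (nor cycles using other $B$-edges of the two matchings, which you never normalize away). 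The paper spends Assumptions~1 and 3--5 of Claim~\ref{clm+cograph+enum} on exactly this point, showing by local flips that every such mixed configuration can be destroyed, so that in the end the only surviving cycle is a $4$-cycle through $e$. Until you supply an argument of that kind, the proof is not complete.
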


\begin{proof}
	Assume that $G$ has a perfect matching $M$ such that $G[B]$ contains at least on edge $M$, and let $e$ be such an edge. To prove the lemma, we only need to show that 
	for any matching $M_\ini$ of size~$k$, there is a transformation sequence of linear length from $M_\ini$ to $M$.

	We first claim the following: we can assume without loss of generality that $e$ is the only edge of $M$ contained in $B$. 
	To see this, assume that there is another edge $e' = ab$ with $a,b \in B$. Since $|A| \geq |B|$, either $M$ also contain an edge $e'' = cd$ with both endpoints in $A$, or there is at least one vertex $x$ in $A$ not matched in $M$. In the first case, the edge $e'$ can be removed from the matching by flipping $ab$ and $cd$ with $ac$ and $bd$. In the second case, we can slide $e' = ab$ into $ax$.
	
	We now prove that for every matching $M_\ini$ of $G$ of size $k$, there is a reconfiguration sequence of length $O(|M \symdiff M_\ini|)$ from $M$ to $M_\ini$. 
	We start by proving the following claim. See Figure~\ref{fig+claimIllustration} for an illustration of some of the cases.
	
	\begin{figure}
		\includegraphics[width=\textwidth]{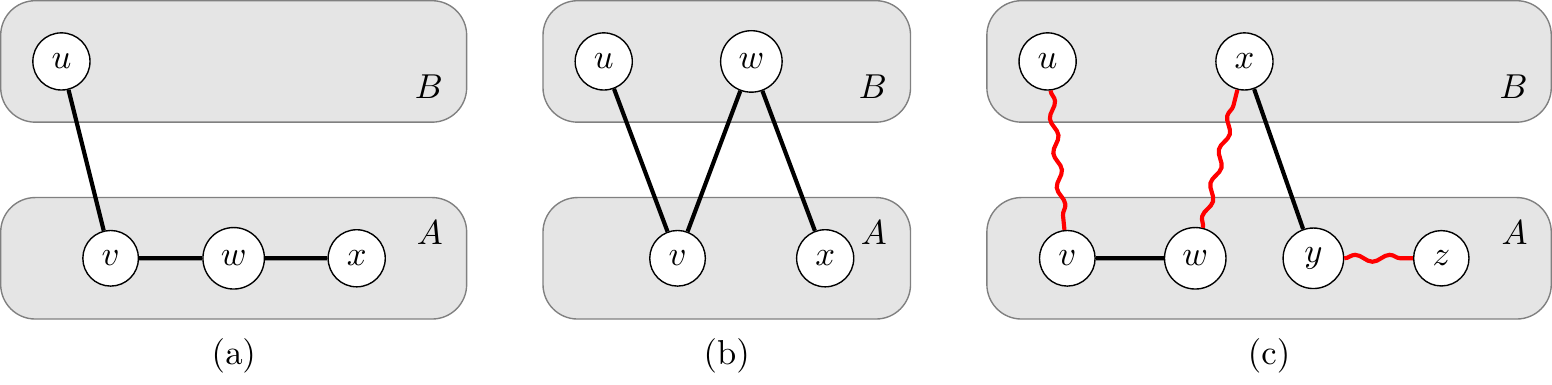}

		\caption{\label{fig+claimIllustration} Representation of some of the cases of Claim~\ref{clm+cograph+enum}. Note that all edges between $A$ and $B$ are present in the graph, but were removed for clarity. The edges drawn are the edges in the symmetric difference $M \symdiff M_\ini$. (a) corresponds to Assumption~\ref{it+bapath}, (b) is Assumption~\ref{it+zigzagpath}, and (c) is Assumption~\ref{it+wpath}. For the last case, the red wavy edges are edges in $M$, the other ones are edges in~$M_\ini$.} 
	\end{figure}

	\begin{claim}
		\label{clm+cograph+enum}
		After a transformation of at most $O(|M \symdiff M_\ini|)$ steps in $M$ and $M_\ini$, we can assume\footnote{The resulting matchings are still denoted by $M$ and $M_\ini$ for simplicity.} that $M$ still contains an edge in $G[B]$ and $M$ and $M_\ini$ also satisfy the following statements:
		\begin{enumerate}
			\item no edge of $M_\ini$ is contained in $B$. \label{it+sameedge}
			\item $M \symdiff M_\ini$ contains no cycle on $A$. \label{it+noacycle}
			\item $M \symdiff M_\ini$ contains no three edges $uv$, $vw$, $wx$, such that $u \in B$ and $v, w, x \in A$. \label{it+bapath}
			\item $M \symdiff M_\ini$ contains no three edges $uv$, $vw$, $wx$, each between $A$ and $B$. \label{it+zigzagpath}
			\item $M \symdiff M_\ini$ contains no five edges $uv$, $vw$, $wx$, $xy$ and $yz$, with $uv \in M$, and $v, w, y, z \in A$ and  $u, x \in B$. \label{it+wpath}
		\end{enumerate}
	\end{claim}
	
	\begin{proof}[Proof of Claim~\ref{clm+cograph+enum}]
		We prove all the points in the increasing order. Let $e$ be the edge of $M$ contained in $G[B]$, and let $x_e$ and $y_e$ be its two endpoints.
		
		\paragraph{Assumption~\ref{it+sameedge}.} Let $ab$ be an edge of $M_\ini$ contained in $B$. Since $|A| \geq |B|$, there is either an edge $cd$ of $M_\ini$ with both endpoints in $A$ or a vertex $x$ in $A$ not matched by $M_\ini$. In the first case, starting from $M_\ini$, we flip $ab$ and $cd$ with $bc$ and $ad$. In the second case, we make a sliding move from $ab$ to $ax$. Since every edge in $M_\ini$ contained in $B$ is in the symmetric difference $M_\ini \symdiff M$ (except if $ab = e$), this operation will be repeated at most $|M_\ini \symdiff M| + 1$ times. Each time, the symmetric difference increases by at most $1$ (by $2$ in the case $e = ab$).

		\paragraph{Assumption~\ref{it+noacycle}.} Let $C$ be a cycle in $M \symdiff M_\ini$,
		such that $V(C) \subseteq A$. We show that we can transform $M \cap E(C)$ to
		$M_\ini \cap E(C)$. The other edges of $M$ are not modified.
		Let $u_1,u_2,\ldots,u_{2\ell}$ be the vertices of $C$, all in $A$. We assume without loss
		of generality that $u_2u_3,u_4u_5,\ldots,u_{2\ell}u_1 \in M$.  We first
		flip $x_ey_e$ and $u_{2\ell}u_1$ to create $x_eu_1$ and $y_eu_{2\ell}$. 
		Then we flip $x_eu_1$ and $u_2u_3$ for $u_1u_2$ and $x_eu_3$.  
		We proceed with $u_4u_5$ and
		$uu_3$ and so on (reducing the length of the cycle in the symmetric difference), 
		until $x_eu_{2\ell-1}$ and $y_eu_{2\ell}$ remains. After
		flipping these two edges for $x_e y_e$ and $u_{2\ell-1}u_{2\ell}$, the resulting matching
		still contains the edge $x_ey_e$ with both endpoints in $B$ and we have reconfigured 
		$M \cap E(C)$ to $M_\ini \cap E(C)$, i.e., the size of the symmetric difference has decreased. The other edges in $M$ were not modified; in particular Assumption~\ref{it+sameedge} still holds.

		Note that number of flips performed in
		this sequence is at most $|E(C)|$ and the symmetric difference decreased by $|E(C)|$. \smallskip
		
		\paragraph{Assumption~\ref{it+bapath}.} We can
		assume without loss of generality that $uv$ and $wx$ are in $M$ and $vw$ is in $M_\ini$.
		Then, we can flip $uv$ and $wx$ for $vw$ and $ux$ and reduce the size of the symmetric difference by $2$.
		Moreover, Assumption~\ref{it+sameedge} 
		still holds in the resulting perfect matching.\smallskip
		
		\paragraph{Assumption~\ref{it+zigzagpath}.} We can
		assume without loss of generality that $uv$ and $wx$ are in $M$ and $vw$ is in $M_\ini$. Then, in $M$
		we can flip $uv$ and $wx$ for $vw$ and $ux$ and reduce the size of the symmetric difference.
		Note moreover that Assumption~\ref{it+sameedge} 
		still holds in the resulting perfect matching.\smallskip
		
		\begin{figure}
			\centering
			\includegraphics[width=\textwidth]{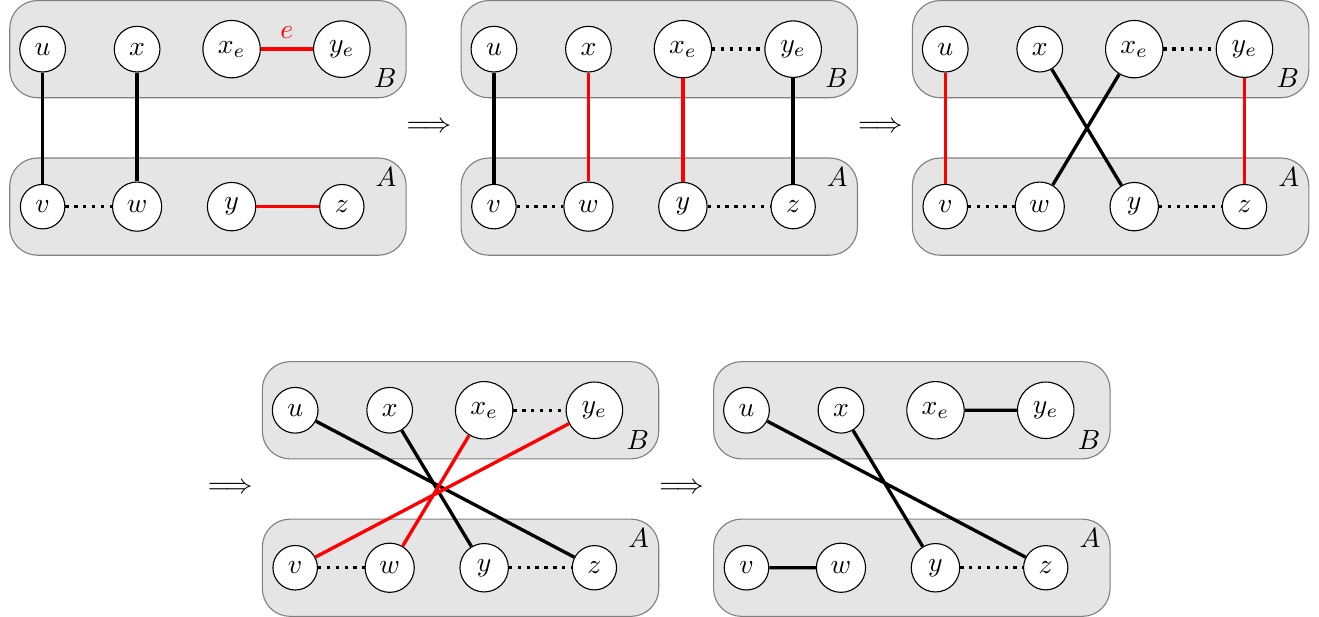}

			\caption{\label{fig:cograph:wpath} Reconfiguration sequence to reduce the symmetric difference in the case of Assumption~\ref{it+wpath}. All the edges between $A$ and $B$ are present in the graph but were removed for clarity. The edges drawn are the edges in $M$. At each step, the two edges in red are the ones which are flipped to obtain the next transition.}
		\end{figure} 
		
		\paragraph{Assumption~\ref{it+wpath}.}
		By assumption, $uv$, $wx$ and $yz$ are edges of $M$.
		Note that $x_e$ and $y_e$ are distinct from $\{ u,v,w,x,y,z \}$ since $u$ and $x$ are incident to edges of $M$ between $A$ and $B$ (and the other vertices are in $A$). We will perform a sequence of flips decreasing the symmetric difference, and preserving $e$ at the end of the transformation (see Figure~\ref{fig:cograph:wpath}).
		Now, in the matching $M$, flip $x_ey_e$ and $yz$ for $x_ey$ and $y_ez$. Then flip
		$xw$ and $x_ey$ for $xy$ and $x_ew$. Note that at this point, the size of the symmetric
		difference with $M_\ini$ decreased by one. Then we flip $y_ez$  and $uv$ for $y_ev$
		and $uz$. Finally, we flip $x_ew$ and $y_ev$ for $x_ey_e$ and $vw$. By these
		operations, the size of the symmetric difference with $M_\ini$ has decreased by
		at least two since we only flip edges of the symmetric difference and the resulting
		matching have edges $vw$ and $xy$ which are in $M_\ini$. Moreover, the resulting matching 
		still contains the edge $e$ with both endpoint in $B$.
		
		\paragraph{Number of flips.} In order to get Assumption~\ref{it+sameedge}, we may need $|M \symdiff M_\ini| +1$
		steps and increase the symmetric difference by $|M \symdiff M_\ini| +2$. 
		In all the other points,
		if we perform $\delta$ flips, we decrease the symmetric difference by at least $c \delta$ (where $c$
		is some constant), and we never have to apply Assumption~\ref{it+sameedge}
		again. So the claim holds after $O(|M \symdiff M_\ini|)$ steps and the size of the symmetric difference
		is still at most $O(|M \symdiff M_\ini|)$.
	\end{proof}
	
	After applying Claim~\ref{clm+cograph+enum}, let us still denote by $M$ and $M_\ini$ the resulting matchings. The number of steps needed to reach this point is at most $O(|M \symdiff M_\ini|)$. Recall that $e$ is the edge of $M$ contained in $B$. We will show that the only cycle that can be in the symmetric difference is a cycle of length $4$ containing the edge $e$. Assume by contradiction that this is not the case, and let $C$ be a cycle in the symmetric difference $M \symdiff M_\ini$. 
	
	Suppose first that $C$ does not contain $e$. By Assumption~\ref{it+noacycle}, there exists a vertex $u \in C \cap B$. Let $v$, $w$ and $x$ the vertices following $u$ on the cycle $C$ and such that $uv \in M$. We must also have $wx \in M$ and $vw \in M_\ini$. Since $C$ does not contain $e$, we know that $v \in A$. By Assumptions~\ref{it+bapath} and~\ref{it+zigzagpath}, we have $w \in A$, and $x \in B$. Since both $u$ and $x$ are on the side $B$, and using Assumption~\ref{it+sameedge}, we have $xu \not \in C$. In particular $|C| > 4$, and since $C$ has even length this implies $|C| \ge 6$. Let $y$ and $z$ be the two vertices following $x$ on the cycle $C$. By Assumption~\ref{it+sameedge}, and since $wx \in M$, we have $y \in A$. Moreover, by Assumption~\ref{it+zigzagpath}, we have $z \in A$. However, in this case the configuration of the vertices $u,v,w,x,y,z$ contradicts Assumption~\ref{it+wpath}.
	
	Consequently, there is only one cycle $C$ in the symmetric difference, and $C$ must contain $e$. Assume by contradiction that $|C| \geq 6$. Let $x_1, x_2, x_3, u, v, w$ be consecutive vertices on the cycle $C$ such that $x_1$ is incident to $e$, and $x_2$ is not. Then $x_1x_2 \in M_\ini$, which implies that $x_2 \in A$. Using the Assumptions~\ref{it+sameedge},~\ref{it+bapath}, and~\ref{it+zigzagpath}, we also have $x_3, v, w \in A$, and $u \in B$. In particular,~$w$ is not incident to~$e$. Let $x_0$ be the other endpoint of $e$. Since the cycle $C$ must have even length, $w$ and $x_0$ are not consecutive in $C$. Let $x$ be the vertex consecutive to $w$ in $C$. Then, by Assumption~\ref{it+bapath} we must have $x \in B$. Since $M_\ini$ does not contain any edge in $B$, this implies that $x$ and $x_0$ are not consecutive in $C$. Let $y$ be the vertex consecutive to $x$ in $C$, and $z$ consecutive to $y$. Since $C$ has even length, we known that $z \neq x_0$. By Assumption~\ref{it+sameedge} we must have $y \in A$, and by Assumption~\ref{it+zigzagpath}, we have $z \in A$. However, in this case the configuration of the vertices $u,v,w,x,y,z$ contradicts Assumption~\ref{it+wpath}.
	
	Hence, the only possible cycle in the symmetric difference is a cycle of length $4$ containing~$e$. After flipping this cycle, the symmetric difference contains only paths and isolated edges. By Lemma~\ref{lem+cyclefree}, we can finish transforming $M_\ini$ into $M$ using an additional $O(M \Delta M_\ini)$ steps.
\end{proof}

We now handle the case where condition~\ref{it+freevert} holds. As for the previous case, when this condition holds a transformation sequence can be easily found between any two matchings.

\begin{lemma}
	\label{lem+freeB}
	Let $G$ be a cograph, and $k \geq 0$ such that condition~\ref{it+freevert} holds. Then, there is a transformation sequence of length $O(n)$ between any two matchings of size $k$.
\end{lemma}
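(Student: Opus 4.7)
The plan is to fix a canonical matching $M^*$ of size $k$ that leaves some chosen vertex $b^* \in B$ unmatched (its existence is guaranteed by condition~\ref{it+freevert}), and to show that every matching $M'$ of $G$ of size $k$ can be transformed into $M^*$ in $O(n)$ flips and slides. The lemma then follows by concatenating the transformation from $M_\ini$ to $M^*$ with the reverse of the transformation from $M_\tar$ to $M^*$.

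The central observation is that a free vertex in $B$, combined with the completeness of the bipartite part between $A$ and $B$, gives very strong flexibility: if a matching $M$ of size $k$ has a free vertex $b \in B$, then for any edge $uv \in M$ with $u \in A$ we may slide $uv$ to $ub$ (legal because $ub \in E(G)$ automatically), thereby shifting the free slot from $b$ to $v$. First I would show that in a constant number of steps we can bring $M'$ to a matching in which $b^*$ itself is unmatched. If some vertex of $B$ is already free in $M'$ we relabel it as $b^*$; otherwise, since $|M^*|=|M'|=k$ and $M^*$ misses $b^*$, the matching $M'$ is not perfect and some $a \in A$ is free in $M'$, and a brief case analysis on the type (inner-$B$ or crossing) of the $M'$-edge covering $b^*$ yields a constant-length sequence that uses $a$ to free $b^*$.

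With $b^*$ unmatched in both $M'$ and $M^*$, the next phase reduces the symmetric difference $M' \symdiff M^*$ to an acyclic one, so that Lemma~\ref{lem+cyclefree} may be invoked. I would process alternating cycles of $M' \symdiff M^*$ one at a time. For each cycle $C = c_1 c_2 \cdots c_{2\ell}$, the plan is to perform a sequence of slides that ``walks'' the free slot from $b^*$ around $C$: slide the $M'$-edge incident to $c_1$ into the slot at $b^*$ (using the relevant $A$-$B$ completeness to legitimize the slide), then continue sliding successive edges along $C$ so that each edge of $M' \cap C$ is either removed or replaced by an edge of $M^* \cap C$, all while maintaining some free vertex in $B$ at the end. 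This eliminates $C$ from the symmetric difference in $O(|C|)$ moves, so all cycles together take $O(n)$ moves.

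The main obstacle will be ensuring that the cycle-breaking phase works cleanly: specifically, (i)~each slide must be legal, which may require an auxiliary slide or flip when the endpoint we want to attach to $b^*$ itself lies in $B$ rather than $A$, using either the completeness of $A$-$B$ or another temporarily free $B$-vertex produced along the way; and (ii)~after processing each cycle, some vertex of $B$ must remain free so that subsequent cycles can be treated in the same manner. A careful case analysis based on the side ($A$ versus $B$) of successive vertices of $C$ and the alternating pattern of $M', M^*$ along $C$ will verify both properties. Once the symmetric difference is acyclic, Lemma~\ref{lem+cyclefree} finishes the transformation in $O(|M' \symdiff M^*|) = O(n)$ additional steps, giving the required $O(n)$ overall bound.
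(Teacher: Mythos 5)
Your overall strategy is the paper's: treat the free $B$-vertex as a movable hole, walk it around each alternating cycle of the symmetric difference by a sequence of slides, restore it at the end of each cycle, and finish off the acyclic remainder with Lemma~\ref{lem+cyclefree}. However, the two places where you defer to a ``brief'' or ``careful'' case analysis conceal genuine gaps. The first is your preliminary phase that frees $b^*$ in $M'$: if every vertex of $B$ is matched in $M'$, the only free vertices lie in $A$, and if the $M'$-partner $x$ of $b^*$ is also in $A$, then freeing $b^*$ by a slide requires an edge from $x$ to a free vertex of $A$ --- an $A$--$A$ edge that a cograph need not contain, so the constant-length sequence you promise does not obviously exist. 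This phase is also unnecessary: since every flip and slide is reversible, you can run the cycle walk \emph{from} the canonical matching $M^*$ (which already has the hole in $B$) towards the arbitrary matching; only the source of the walk needs a free $B$-vertex. That is exactly what the paper does.

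The second gap concerns the legality of the walk itself. Every intermediate slide in the walk uses an edge of the cycle $C$, hence an edge of $G$; the only edge off the cycle is the one joining the hole $b^*$ to the starting vertex $c_1$, which is guaranteed exactly when $c_1\in A$. So your anticipated case analysis on the sides of successive vertices collapses to ``choose $c_1\in A$'', and the real obstruction is a cycle of the symmetric difference lying entirely inside $B$, where no vertex need be adjacent to $b^*$ and no amount of ``auxiliary slides'' is obviously available. The missing idea is the reduction the paper performs at the outset: if condition~\ref{it+edge} holds one is already done by Lemma~\ref{lem+cclifedge}, so one may assume it fails, whence no matching of size $k$ uses an edge of $G[B]$ and $G[B]$ can be taken to be an independent set; then every cycle of the symmetric difference meets $A$ and the walk is always legal with no further casework. (Your worry (ii) is the easy part: the final slide returns the hole to $b^*$ by moving $c_1b^*$ onto the cycle edge $c_1c_{2\ell}$, so a free $B$-vertex survives each cycle.)
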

\begin{proof}
	Without loss of generality, we can assume that condition~\ref{it+edge} does not hold since otherwise we can conclude directly using~Lemma~\ref{lem+cclifedge}. Consequently, no matching of size $k$ of $G$ uses any of the edges in $G[B]$. Hence, these edges can be removed without changing in any way the reconfiguration graph, and we can assume that $G[B]$ is an independent set.
	
	Let $M$ be a matching of $G$ of size $k$ such that there exists a vertex $v$ in $B$ not matched in $M$. Let $M_\ini$ be any matching of $G$ of size $k$. We will show that there is a transformation from $M$ to $M_\ini$ of length at most $O(|M \symdiff M_\ini|)$. If the symmetric difference is zero, then the result is trivial. If the symmetric difference contains no cycle, then the result follows from Lemma~\ref{lem+cyclefree}. 
	
	Let $C$ be a cycle in the symmetric difference $M\symdiff M_\ini$. Since $G[B]$ is an independent set, $C$ must contain at least one vertex in $A$. Let $x_1, \ldots, x_{2t}$ be the vertices in $C$, with $x_1 \in A$, and $x_1x_2 \in M$. We consider the following transformation starting from the matching $M$:
	\begin{itemize}
		\item slide $x_1x_2$ to $x_1v$,
		\item for $i$ from $2$ to $t-1$ slide $x_{2i+1}x_{2i+2}$ to $x_{2i}x_{2i+1}$,
		\item finally, slide $x_1v$ to $x_{2t}x_1$.
	\end{itemize}
	This operation progressively reconfigure $M$ such that $M$ and $M_\ini$ agree on $C$. Note that the edges of $M$ outside of $C$ are not modified by the transformation. Moreover, if $M_2$ is the matching obtained after the transformation, then the symmetric difference with $M_\ini$ has decreased by $|C| = 2t$. Additionally, the vertex $v$ is still not matched in $M_2$. Since the number of steps performed by this transformation is $t+1 \leq |C|$, the result follows by applying induction with $M_2$ and $M_\ini$.
\end{proof}


In case none of the two conditions~\ref{it+edge} and~\ref{it+freevert} holds, the following lemma states that we only need to consider what happens on the subgraph induced by $A$. Given a matching $M$, we will note $M^A$ the matching of $G[A]$ induced by the edges of $M$. Remark that even if $M$ is a perfect matching of $G$, $M^A$ might not be a perfect matching of $G[A]$ since some of the vertices in $A$ can be matched to vertices in $B$ by $M$. This is the main reason we had to extend the problem to non-perfect matchings. We have the following.

\begin{lemma}
	\label{lem+nofreenoedge}
	Let $G$ be a connected cograph with root partition $A,B$ with $|A| \geq |B|$, and $k \geq 0$ such that conditions~\ref{it+edge} and~\ref{it+freevert} do not hold. Let $M_\ini$ and $M_\tar$ be two matchings of $G$ of size $k$. Then:
	\begin{itemize}
		\item there is a transformation sequence from $M_\ini$ to $M_\tar$ in $G$ if and only if there is a transformation sequence from $M_\ini^A$ to $M_\tar^A$ in $G[A]$;
		\item if there is a transformation of length $t$ from $M_\ini^A$ to $M_\tar^A$ in $G[A]$, then there is a transformation from $M_\ini$ to $M_\tar$ of length at most $t + O(|B|)$.
	\end{itemize}
\end{lemma}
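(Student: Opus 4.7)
My plan is to first extract the two structural invariants forced by the failure of conditions \ref{it+edge} and \ref{it+freevert}, and then use them to set up a clean projection-and-lifting correspondence between reconfiguration sequences of size-$k$ matchings in $G$ and those in $G[A]$. Since \ref{it+freevert} fails, every size-$k$ matching of $G$ saturates $B$; since \ref{it+edge} fails, no edge of such a matching is contained in $G[B]$. Hence for every size-$k$ matching $N$ of $G$, every vertex of $B$ is matched to a vertex of $A$, and $|N^A| = k - |B|$. In particular $M_\ini^A$ and $M_\tar^A$ are matchings of $G[A]$ of the same size, so the statement to prove even makes sense.

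For the ``only if'' direction, the plan is to show that any single reconfiguration step $N \onestep{G} N'$ between two size-$k$ matchings projects to a flip, a slide, or a no-op on $G[A]$. I would case-split on how the affected edges meet $B$. If the affected vertices are entirely in $A$, the step is literally a flip or slide on $G[A]$. If a flip in $G$ uses a $4$-cycle with exactly one $B$-vertex, say $\{ab,xy\} \leftrightarrow \{ax,by\}$ with $b \in B$ and $a,x,y \in A$, then the $A$-projection is the slide $xy \to xa$. A flip touching two $B$-vertices would force either an edge of $M$ in $G[B]$ or, after the flip, a $B$-vertex saturated by an edge of $G[B]$, both forbidden by \ref{it+edge}. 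Finally, a slide that modifies an $A$-$B$ edge either unmatches a vertex of $B$ (forbidden by \ref{it+freevert}) or only swaps the $A$-endpoint of an $A$-$B$ edge, which leaves $N^A$ unchanged. Deleting the duplicates in the projected sequence yields a valid reconfiguration from $M_\ini^A$ to $M_\tar^A$ in $G[A]$.

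For the ``if'' direction together with the length bound, I will lift the given length-$t$ sequence in $G[A]$ step by step, starting from $M_\ini$. A flip in $G[A]$ lifts to the same flip in $G$. A slide $ab \to ac$ in $G[A]$ (with $c$ unmatched in the current projection) lifts as follows: if $c$ is unmatched in the current $G$-matching $N$, perform the slide in $G$; otherwise $c$ is matched in $N$ to some $b' \in B$, and since $A$ is complete to $B$ the $4$-cycle $a\,b\,b'\,c$ is present in $G$, so we perform the flip $\{ab,cb'\} \leftrightarrow \{ac, bb'\}$, which simultaneously realizes the desired $A$-slide and reroutes $b'$ from $c$ to $b$. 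This produces, in exactly $t$ steps, a matching $N$ of $G$ with $N^A = M_\tar^A$. At this point $N$ and $M_\tar$ match the same subset $A^\star \subseteq A$ to $B$, but the bijection $B \to A^\star$ may differ; since $G[A^\star,B]$ is complete bipartite, each transposition of two $B$-endpoints is realized by a single flip $\{a_1b_1,a_2b_2\} \leftrightarrow \{a_1b_2,a_2b_1\}$, so the residual permutation can be corrected in at most $|B|-1$ flips. The total length is therefore $t + O(|B|)$, as required.

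The delicate point is the case analysis in the projection step: I need to argue uniformly that every flip or slide in $G$ between size-$k$ matchings is either a $G[A]$-move or a no-op, using only the \ref{it+edge}/\ref{it+freevert} invariants, and in particular rule out any step that would transiently violate them. Once that analysis is pinned down, the lifting is explicit, the accounting of the extra $O(|B|)$ flips at the end is transparent, and both directions of the equivalence follow.
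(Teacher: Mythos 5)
Your proposal follows essentially the same route as the paper: project each $G$-move to a $G[A]$-move (or no-op) using the two invariants forced by the failure of (C1) and (C2), lift $G[A]$-moves back by converting slides into flips via the $B$-partner, and finish with an $O(|B|)$ correction on the complete bipartite part. However, two specific claims in your write-up are false as stated. First, in the projection direction you assert that a flip on a $4$-cycle touching two $B$-vertices is impossible. That is only true when the two $B$-vertices are adjacent on the cycle (which would force a $B$--$B$ matching edge); if they are \emph{opposite} on the cycle, say $a_1b_1, a_2b_2 \leftrightarrow a_1b_2, a_2b_1$ with $a_1,a_2 \in A$ and $b_1,b_2 \in B$, the flip is perfectly legal, uses no edge of $G[B]$, and is a no-op on the $A$-projection. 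This case must appear in your analysis (the paper catches it as ``$D$ contains no edge of $G[A]$''), and it is in fact exactly the kind of flip your own permutation-fixing step at the end relies on --- so your case analysis contradicts your lifting argument.

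Second, in the lifting direction you claim that once $N^A = M_\tar^A$, the matchings $N$ and $M_\tar$ match the \emph{same} subset $A^\star \subseteq A$ to $B$. They need not: both $A$--$B$ parts have size $|B|$ and avoid $V(N^A)$, but the two subsets of $A$ can still differ (e.g.\ $A=\{a_1,a_2\}$, $B=\{b\}$, $k=1$, $N=\{a_1b\}$, $M_\tar=\{a_2b\}$). The paper inserts an intermediate phase here: for each $a \in A$ matched to some $b \in B$ in $N$ but unmatched in $M_\tar$, there is a vertex $a'$ matched in $M_\tar$ but not in $N$, and one slides $ab$ to $a'b$; only after the two subsets agree does the transposition argument apply. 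This costs an additional $O(|B|)$ slides, so your final bound $t+O(|B|)$ survives, but the step is missing. Both issues are local and repairable without changing your overall strategy.
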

\begin{proof}
	Let $M_\ini$ and $M_\tar$ be two matchings of size $k$ of $G$. 
	First assume that there is a transformation sequence $S$ from $M_\ini^A$ to $M_\tar^A$ in $G[A]$. We will build a transformation sequence from $M_\ini$ to $M_\tar$ in $G$. First, observe that any flip in $S$ on the subgraph $G[A]$ is also a valid flip on the whole graph $G$. For sliding moves, there are two possibilities. Let $u, v, w$ be three vertices in $A$, and consider the sliding move in $G[A]$ which replaces $uv$ by $vw$. Either $w$ is not matched to a vertex in $B$, and in this case this move is also a valid sliding move on the whole graph. Or $w$ is matched to a vertex $x \in B$. In this case, consider the operation of flipping $uv$ and $wx$ for $vw$ and $xu$. Then this transformation acts exactly as the original sliding move on $A$.

	Hence, by eventually replacing some of the sliding moves by flips as explained above, we obtain a transformation sequence $S'$ which transforms $M_\ini$ into a matching $M$ such that $M^A$ and $M_\tar^A$ are equal. To finalize the transformation, from $M$ to $M_\tar$, we only need the two following observation.
	\begin{itemize}
		\item We can make $M$ and $M_\tar$ agree on the vertices $a \in A$ which are matched to vertices in $B$. Indeed, if there is a vertex $a \in A$ which is matched to $b \in B$ in $M$ but not in $M_\tar$, then there must be a vertex $a' \in A$ which is matched in $M_\tar$ but not in $M$ (since $|M| = |M_\tar|$). Then, we can simply slide~$ab$ to~$a'b$.
		
		\item Let $A'$ the set of vertices in $A$ which are matched to vertices in $B$ in $M_\ini$ (and by the point above, also in $M_\tar$), and $G'$ the complete bipartite graph between $A$ and $B$. Note that $G'$ is a subgraph of $G$. The edges in $M_\ini$ (and $M_\tar$) in $G'$ form a perfect matching of $G'$. These perfect matchings can be seen as a permutation on the vertices of $B$. A flip in this graph consists in applying a transposition to the permutation. Hence, transforming $M$ into $M_\tar$ is equivalent to transforming one permutation into an other using transposition. It is well known that this is always possible using at most $|B|$ transpositions.
	\end{itemize}
	
	Hence, if there is a transformation of length $t$ from $M^A_\ini$ to $M^A_\tar$, then there is a transformation of length $t + O(|B|)$ from $M_\ini$ to $M_\tar$.
	
	Conversely, assume that there is a transformation sequence from $M_\ini$ to $M_\tar$. We want to show that there is a transformation sequence from $M_\ini^A$ to $M_\tar^A$. For this, we only need to show that if there is a one step transformation between $M_\ini$ and $M_\tar$, there is a one step transformation between $M_\ini^A$ and $M_\tar^A$. We consider the symmetric difference $D = M_\ini \Delta M_\tar$. If $D$ contains no edge in $G[A]$, then $M_\ini^A$ and $M_\tar^A$ are equal, and there is nothing to prove. Similarly, if $D \subset G[A]$, then $M_\ini^A$ and $M_\tar^A$ are adjacent by definition. Thus, we can assume in the following that none of these two cases happen.
	
	If $D$ is a path of length $3$ (i.e., the transformation is a sliding move). By the remarks above, we can assume that $D$ contains one edge in $G[A]$, but not the other. Let $u,v,w$ be the vertices of $D$, with $u,v \in A$ and $w \in B$. Then $w$ is not matched in one of $M_\ini$ or $M_\tar$. This contradicts the assumption that $G$ does not satisfy condition~\ref{it+freevert}. 
	
	If $D$ is a cycle of length $4$. There are two possible sub-cases:
	\begin{itemize}
		\item $D$ contains exactly on edge in $G[A]$. In this case $D$ must also contain one edge in $G[B]$, and this implies that one of $G_\ini$ or $G_\tar$ contains an edge in $G[B]$, a contradiction of the assumption that $G$ does not satisfy the condition~\ref{it+edge}.
		
		\item $D$ contains exactly two edges in $G[A]$. These two edges must be incident since otherwise $D \subset G[A]$. Then, this means that $M_\ini^A \Delta M_\tar^A$ is a path of length $3$ and the two matchings are adjacent in the reconfiguration graph.
	\end{itemize}
	If $D$ has three edges in $G[A]$, then we must have $D \subseteq G[A]$, and this case was already handled above. 
	
	Hence, in any case, if there is a transformation from $M_\ini$ to $M_\tar$, there is also a transformation from $M_\ini^A$ to $M_\tar^A$. This shows the reverse implication and ends the proof of the lemma.
\end{proof}

\begin{proof}[Proof of Theorem~\ref{the:gmcograph}]
	Given a cograph $G$ and two matchings $M_\ini$ and $M_\tar$ of $G$, the algorithm proceeds as follows: 
	\begin{enumerate}
		\item \label{step1} If $|M_\ini| \neq |M_\tar|$, then return \NO, otherwise let $k = |M_\ini| = |M_\tar|$
		\item \label{step2} If $G$ is not connected, call recursively the algorithm on each connected component. Otherwise let $A,B$ be the root
		partition of $G$, with $|A| \geq |B|$.
		\item \label{step3} If $G$ satisfies one of the conditions~\ref{it+edge} and~\ref{it+freevert}, output \YES, and produce a transformation sequence using either Lemma~\ref{lem+cclifedge} or Lemma~\ref{lem+freeB}.
		\item \label{step4} If $G$ does not satisfy any of these conditions, call recursively the algorithm on $A$, and decide the instance (and produce a transformation if it exists) using Lemma~\ref{lem+nofreenoedge}.
	\end{enumerate}
	
	Let us show that this algorithm is correct, runs in polynomial time and produces a transformation sequence of linear length.
	
	\paragraph{Correctness.} Whenever the algorithm answers \YES, it also provides a certificate (i.e., a transformation sequence). Hence, the only case where the algorithm might be incorrect is when it answers \NO. Let us show by induction on $G$ that if the algorithm returns \NO\ on a cograph $G$ given two matchings $M_\ini$ and $M_\tar$ as input, then no transformation exists between the two matchings.
	If the algorithm returns \NO\ in step~\ref{step1}, then there is clearly no transformation. If one of the recursive calls returns \NO\ in step~\ref{step2}, then there is trivially no transformation either. Finally, if one of the recursive calls returns \NO\ in step~\ref{step4}, then there is also no transformation sequence by Lemma~\ref{lem+nofreenoedge}.

	\paragraph{Running-time.} At each recursive call, the algorithm only performs a polynomial number of steps. Indeed, checking whether $G$ is connected, and comparing the size of $M_\ini$ and $M_\tar$ can be done in polynomial time. Additionally, computing cotree of a cograph can be done in polynomial time, and as we mentioned before, the two conditions~\ref{it+edge} and~\ref{it+freevert} can be verified in polynomial time. Finally, all the recursive calls are made on vertex disjoint subgraphs. Consequently, it follows immediately that the algorithm runs in polynomial time. 
	
	\paragraph{Length of transformation.} Let $G$ be a cograph, and $M_\ini$ and $M_\tar$ two matchings of $G$ such that there is a transformation from $M_\ini$ to $M_\tar$. Let us show by induction on $G$ that the algorithm produces a transformation of length at most $Cn$ for some constant $C$. 
	
	If the algorithm returns at step~\ref{step2}, then by induction it produces a transformation of length at most $C n_i$ on each component $G_i$ of $G$, with $n_i = |G_i|$. By combining the transformations on each component, we obtain a transformation for the whole graph of length at most $C(\sum n_i) = C n$.
	
	If the algorithm returns \YES\ during step~\ref{step3}, then the induction step directly follows from Lemmas~\ref{lem+cclifedge} and~\ref{lem+freeB}, provided $C$ is chosen large enough.
	
	Finally, if the algorithm outputs \YES\ at step~\ref{step4}, then using the induction hypothesis on $A$, it produces a transformation of length at most $C|A|$ from $M_\ini \cap G[A]$ to $M_\tar \cap G[A]$. By Lemma~\ref{lem+nofreenoedge}, the total sequence produced by the algorithm has length at most $C|A| + C'|B| \leq Cn$ where $C'$ is the constant in the big-Oh notation of Lemma~\ref{lem+nofreenoedge} and assuming $C \geq C'$.
\end{proof}

\section{Conclusion}

We introduced the \pmreconf problem and analyzed its complexity from the
viewpoint of graph classes. We showed that this problem is
\PSPACE-complete on split graphs and bipartite graphs of bounded bandwidth and
maximum degree five. Furthermore, we gave polynomial-time algorithms for
strongly orderable graphs, outerplanar graphs, and cographs. Each of the
algorithm outputs a reconfiguration sequence of linear length in polynomial
time. 

A natural open question is on which graph classes a \emph{shortest}
reconfiguration sequence can be found in polynomial time.  Furthermore, it would be
interesting to investigate if the flip operation can be used in order to sample
perfect matchings uniformly.

\newpage
\providecommand{\noopsort}[1]{}

\newpage
\appendix
\section{A figure and Proofs omitted from Section~\ref{sec:hard}}

\begin{figure}[h]
	\begin{center}
		\includegraphics[width=\linewidth]{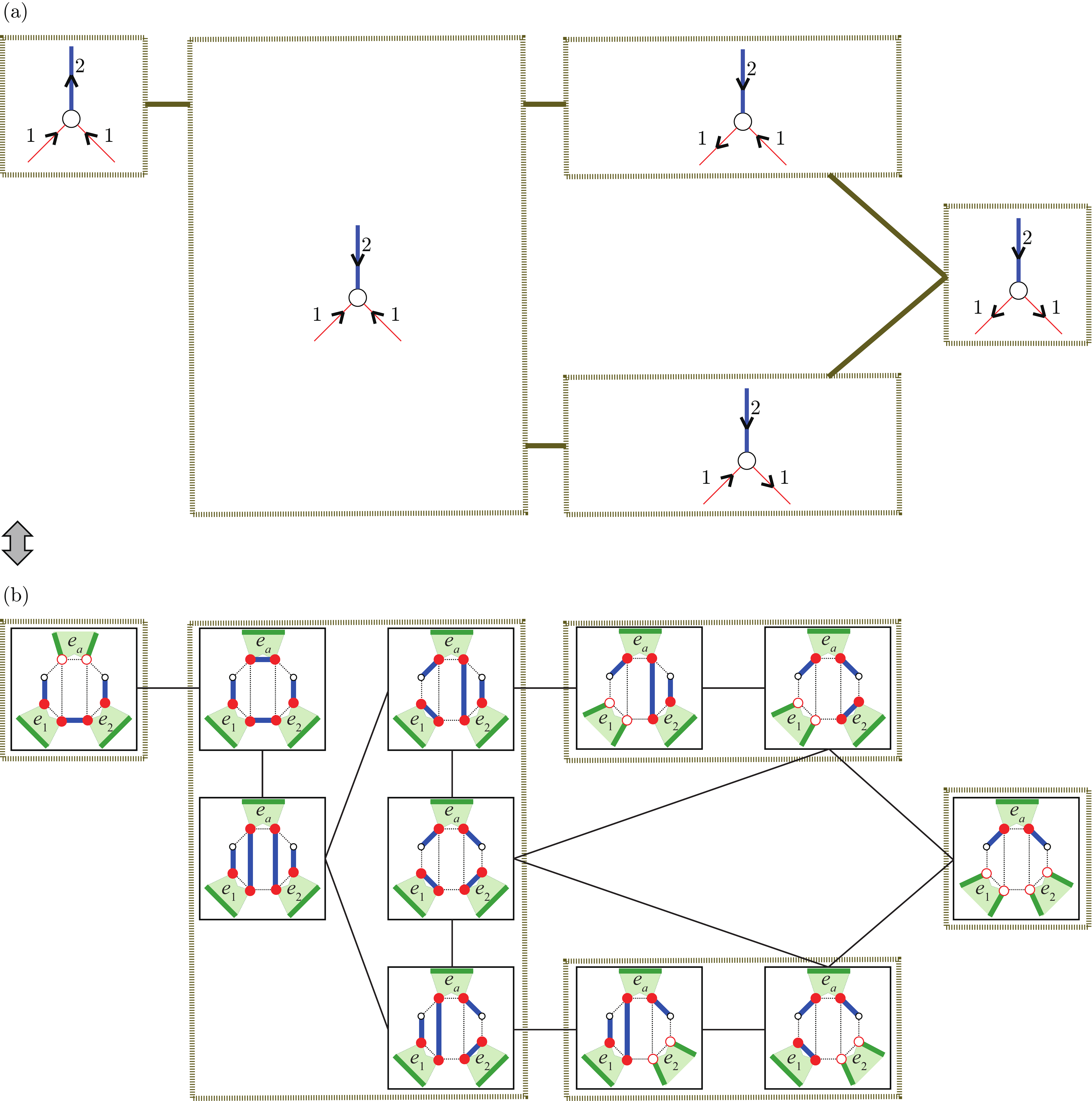}
	\end{center}
	\caption{(a) All valid orientations of three edges incident to an NCL {\sc and} vertex $v$, and (b)~all configurations of the {\sc and} gadget together with three incident edge gadgets, where the inside of each connector is painted (by red) if it is matched by the {\sc and} gadget.}
	\label{fig:and_configuration}
\end{figure}

\subsection{Proof of Lemma~\ref{lem:hardcorrect}}
\begin{proof} 
	We first prove the only-if direction.
	Suppose that there exists a desired sequence of
	NCL configurations between $C_\ini$ and $C_\tar$, and consider any two adjacent NCL configurations $C_{i-1}$ and $C_i$ in the sequence.
	Then, only one NCL edge $vw$ changes its orientation between $C_{i-1}$ and $C_i$.
	Notice that, since both $C_{i-1}$ and $C_i$ are valid NCL configurations, the NCL {\sc and}/{\sc or} vertices $v$
	and $w$ have enough in-coming arcs even without $vw$.
	Therefore, we can simulate this reversal
	by the reconfiguration sequence of perfect matchings in \figurename~\ref{fig:edge_configuration}(b) which passes through the
	neutral orientation of $vw$ as illustrated in \figurename~\ref{fig:edge_configuration}(a).
	Recall that both {\sc and} and {\sc or} gadgets
	are internally connected, and preserve the external adjacency. Therefore, any reversal of an
	NCL edge can be simulated by a reconfiguration sequence of perfect matchings of $G$, and hence
	there exists a reconfiguration sequence between $M_\ini$ and $M_\tar$.
	
	We now prove the if direction.
	It is important to notice that any cycle of length four in $G$ belongs to exactly one gadget and the edge joining two connectors; recall the edges $e_v$ and $e_w$ in \figurename~\ref{fig:edge_configuration}(b).
	Therefore, even in a whole graph $G$, a flip of edges along a cycle of length four can happen only inside of each gadget (and edges joining two connectors). 
	Suppose that there exists a reconfiguration sequence $M_0, M_1, \ldots , M_\ell$ from $M_0=M_\ini$ to $M_\ell = M_\tar$.
	Notice that, by the construction of gadgets, any perfect matching of $G$
	corresponds to a valid NCL configuration such that some NCL edges may take the neutral orientation.
	In addition, $M_\ini$ and $M_\tar$ correspond to valid NCL configurations without any neutral orientation. 
	Pick the first index $i$ in the reconfiguration sequence $M_0, M_1, \ldots , M_\ell$
	which corresponds to changing the direction of an NCL edge $vw$ to the neutral orientation.
	Then, since the neutral orientation contributes to neither $v$ nor $w$, we can simply ignore the change of the NCL edge $vw$ and keep the direction of $vw$ as the same as the previous direction.  
	By repeating this process and deleting redundant orientations if needed, we can obtain a sequence of valid adjacent orientations between $C_\ini$ and $C_\tar$ such that no NCL edge takes the neutral orientation. 
\end{proof}

\subsection{Proof of Corollary~\ref{cor:kfactor}}
\begin{proof}
	We have a simple polynomial-time reduction from {\sc Perfect Matching Reconfiguration}. 
	Let $G, M_\ini, M_\tar$ be an instance of {\sc Perfect Matching Reconfiguration}. 
	We create a new graph $G'$ as follows. 
	First $G'$ contains a copy of $G$, and then we add to it $(k-1) |V(G)|$ new vertices $x_i^j$ with $i \le n$ and $j \le k-1$. 
	We create the edge between $x_i^j$ and $x_i$ for every $i,j$ and create the edges between $x_i^j$ and $x_i^{j'}$ for every $i$ and every $j \ne j'$. 
	Note that every vertex $x_i^j$ has degree exactly $k$. 
	
	Now consider the following $k$-factors $H_\ini$ and $H_\tar$. 
	$H_\ini$ contains all the edges incident to $x_i^j$ for every $i,j$ and the edges of the perfect matching $M_\ini$. 
	$H_\tar$ contains all the edges incident to $x_i^j$ for every $i,j$ and the edges of the perfect matching $M_\tar$. 
	Since all the edges incident to $x_i^j$ have to be in every $k$-factor, there exists a sequence of flip operations transforming $H_\ini$ into $H_\tar$ if and only if there is a reconfiguration sequence transforming $M_\ini$ into $M_\tar$.
\end{proof}

\subsection{Proof Corollary~\ref{cor:kflip} (sketch)}
\begin{proof}
  Consider the gadgets shown in Figure~\ref{fig:gadget}. We replace each orange
  edge of each gadget by a path on $k-3$ edges. Observe that this does not
  alter the number of perfect matchings on each gadget. Using these gadgets, we
  construct from an NCL instance a graph and two perfect matchings as in the
  proof of Theorem~\ref{the:bipartite}.
  
  It is readily verified that any two perfect matchings on each gadget are
  connected by a sequence of flip operations on cycles of length $k$. Note
  that by the selection of the orange edges, each alternating cycle of length $k$
  in the graph that passes through an edge gadget contains an orange edge.
  Therefore, no alternating cycle of length $k$ involves more than one gadget.
  The \PSPACE-completeness of $k$-\textsc{Perfect Matching Reconfiguration}
  follows from the proof of Theorem~\ref{the:bipartite} and the observation
  that each flip involves precisely an orange edge.
\end{proof}

\section{A Proof omitted from Section~\ref{subsec:interval}}
\begin{proof}[Proof of Theorem~\ref{the:stronglyorder}]
	Suppose we are given a strongly orderable graph $G$ together with a corresponding ordering $(v_1,v_2,\ldots,v_n)$ of its vertices. We will argue that every perfect matching of $G$ can be reconfigured in a linear number of steps into any other. We proceed by induction on the number of vertices. We may assume that $G$ is connected and non-empty. In particular, we have $n \geq 2$. Let $M$ be the canonical perfect matching of $G$ with respect to $(v_1,v_2,\ldots,v_n)$. Let $N$ be an arbitrary perfect matching of $G$. There is an edge $v_1v_p$ in $M$, and an edge $v_1v_q$ in $N$. If $p=q$, we delete both vertices from the graph and apply induction. Assume now that $p \neq q$. By choice of a canonical perfect matching, $p < q$. There is an edge $v_pv_r$ in $N$. From the definition of a strong ordering (with $i=1$, $k=p$, $\ell=q$ and $j=r)$, it follows that the edge $v_qv_r$ belongs to the graph. Therefore, from $N$ we can swap the two edges $v_1v_q$ and $v_pv_r$ for $v_1v_p$ and $v_qv_r$. We can then delete the two vertices $v_1$ and $v_q$ and apply induction.
\end{proof}




\end{document}